\newtheorem{theorem}{Theorem}
\newtheorem{proposition}{Proposition}
\newtheorem{corollary}{Corollary}
\newtheorem{definition}{Definition}
\newtheorem{remark}{Remark}
\newcommand{\expectation}{\ensuremath{\mathbb{E}}}
\newcommand{\prob}{\ensuremath{\mathbb{P}}}
\newcommand{\naturals}{\ensuremath{\mathbb{N}}}
\newcommand{\reals}{\ensuremath{\mathbb{R}}}
\begin{document}
\title{\Huge{On Reverse Pinsker Inequalities}}

\author{Igal Sason
\thanks{
I. Sason is  with the Department of Electrical Engineering, Technion--Israel
Institute of Technology, Haifa 32000, Israel (e-mail: sason@ee.technion.ac.il).
This work has been submitted in part to the {\em 2015 IEEE Information Theory
Workshop (ITW 2015)}, Jeju Island, Korea, October~11--15, 2015. The research
was supported by the Israeli Science Foundation (ISF), grant number 12/12.}}

\maketitle

\begin{abstract}
New upper bounds on the relative entropy are derived as a function of the total
variation distance. One bound refines an inequality by Verd\'{u} for general
probability measures. A second bound improves the tightness of an inequality
by Csisz\'{a}r and Talata for arbitrary probability measures that are defined
on a common finite set. The latter result is further extended, for probability
measures on a finite set, leading to an upper bound on the
R\'{e}nyi divergence of an arbitrary non-negative order (including $\infty$)
as a function of the total variation distance. Another lower bound by Verd\'{u}
on the total variation distance, expressed in terms of the distribution
of the relative information, is tightened and it is attained under some conditions.
The effect of these improvements is exemplified.
\end{abstract}

{\bf{Keywords}}:
Pinsker's inequality, relative entropy, relative information, R\'{e}nyi divergence,
total variation distance, typical sequences.

\thispagestyle{empty}

\section{Introduction}
Consider two probability measures $P$ and $Q$ defined on a common
measurable space $(\mathcal{A}, \mathcal{F})$.
The Csisz\'{a}r-Kemperman-Kullback-Pinsker inequality states that
\begin{equation}
D(P \| Q) \geq \frac{\log e}{2} \cdot |P-Q|^2
\label{eq: Pinsker's inequality}
\end{equation}
where
\begin{equation}
D(P \| Q) = \expectation_P\left[ \log \frac{\text{d}P}{\text{d}Q} \right]
= \int_{\mathcal{A}} \text{d}P \, \log \frac{\text{d}P}{\text{d}Q}
\label{eq: definition of the relative entropy}
\end{equation}
designates the relative entropy from $P$ to $Q$ (a.k.a. the
Kullback-Leibler divergence), and
\begin{equation}
|P-Q| = 2 \, \sup_{A \in \mathcal{F}} \bigl|P(A) - Q(A)\bigr|
\label{eq: TV distance}
\end{equation}
designates the total variation distance (or $L_1$ distance) between $P$ and $Q$.
One of the implications of inequality \eqref{eq: Pinsker's inequality} is that
convergence in relative entropy implies convergence in total variation distance.
The total variation distance is bounded $|P-Q| \leq 2$, in contrast to the relative
entropy.

Inequality \eqref{eq: Pinsker's inequality} is a.k.a.
Pinsker's inequality, although the analysis made by Pinsker \cite{Pinsker60}
leads to a significantly looser bound where $\frac{\log e}{2}$ on the
RHS of \eqref{eq: Pinsker's inequality} is replaced by
$\frac{\log e}{408}$ (see \cite[Eq.~(51)]{Verdu_ITA14}). Improved and generalized
versions of Pinsker's inequality have been studied in \cite{FedotovHT_IT03},
\cite{Gilardoni06}, \cite{Gilardoni10}, \cite{OrdentlichW_IT2005}, \cite{ReidW11},
\cite{Vajda_IT1970}.

For any $\varepsilon > 0$, there exists a pair of probability measures
$P$ and $Q$ such that $|P-Q| \leq \varepsilon$
while $D(P\|Q) = \infty$. Consequently, a reverse Pinsker inequality
which provides an upper bound on the relative entropy in terms of the total
variation distance does not exist in general. Nevertheless, under some
conditions, such inequalities hold \cite{CsiszarT_IT06},
\cite{Verdu_ITA14}, \cite{Verdu_book} (to be addressed later in this section).

If $P \ll Q$, the relative information in $a \in \mathcal{A}$ according to
$(P,Q)$ is defined to be
\begin{equation}
i_{P\|Q}(a) \triangleq \log \frac{\text{d}P}{\text{d}Q} \, (a).
\label{eq: relative information}
\end{equation}
From \eqref{eq: definition of the relative entropy}, the relative entropy
can be expressed in terms of the relative information as follows:
\begin{align}
D(P \| Q) = \expectation \bigl[ i_{P \| Q}(X) \bigr]
= \expectation \bigl[ i_{P \| Q}(Y) \, \exp\bigl(i_{P \| Q}(Y) \bigr) \bigr]
\label{eq: relative entropy expressed in terms of relative information}
\end{align}
where $X \sim P$ and $Y \sim Q$ (i.e., $X$ and $Y$ are distributed according
to $P$ and $Q$, respectively).
The total variation distance is also expressible
in terms of the relative information \cite{Verdu_ITA14}. If $Q \ll P$
\begin{equation}
|P-Q| = \expectation \Bigl[ \bigl| 1 - \exp\bigl(i_{P\|Q}(Y)\bigr) \bigr| \Bigr]
\label{eq: total variation distance expressed in terms of relative information1}
\end{equation}
and if, in addition, $P \ll Q$, then
\begin{equation}
|P-Q| = \expectation \Bigl[ \bigl| 1 - \exp\bigl(-i_{P\|Q}(X)\bigr) \bigr| \Bigr].
\label{eq: total variation distance expressed in terms of relative information2}
\end{equation}

Let
\begin{equation}
\beta_1^{-1} \triangleq \sup_{a \in \mathcal{A}} \frac{\text{d}P}{\text{d}Q} \, (a)
\label{eq: beta1}
\end{equation}
with the convention, implied by continuity, that $\beta_1 = 0$ if $i_{P\|Q}$
is unbounded from above.
With $\beta_1 \leq 1$, as it is defined in \eqref{eq: beta1}, the following
inequality holds (see \cite[Theorem~7]{Verdu_ITA14}):
\begin{align}
\frac{1}{2} |P-Q| & \geq \left( \frac{1-\beta_1}{\log \frac{1}{\beta_1}} \right) D(P \| Q).
\label{eq: Sergio's inequality - ITA14}
\end{align}
From \eqref{eq: Sergio's inequality - ITA14}, if the relative information
is bounded from above, a reverse Pinsker inequality holds. This inequality
has been recently used in the context of the optimal quantization of probability
measures when the distortion is either characterized by the total variation distance
or the relative entropy between the approximating and the original
probability measures \cite[Proposition~4]{BochererG_arXiv}.

Inequality~\eqref{eq: Sergio's inequality - ITA14} is refined in this work,
and the improvement that is obtained by this refinement is exemplified (see
Section~\ref{section: A Tightened Reverse Pinsker Inequality for General Probability Measures}).

In the special case where $P$ and $Q$ are defined
on a common discrete set (i.e., a finite or countable set) $\mathcal{A}$,
the relative entropy and total variation distance are simplified to
\begin{align*}
& D(P \| Q) = \sum_{a \in \mathcal{A}} P(a) \, \log \frac{P(a)}{Q(a)}, \\
& |P-Q| = \sum_{a \in \mathcal{A}} \bigl|P(a) - Q(a)\bigr| \triangleq |P-Q|_1.
\end{align*}

A restriction to probability measures on a finite set $\mathcal{A}$ has led
in \cite[p.~1012 and Lemma~6.3]{CsiszarT_IT06} to the following upper bound on the
relative entropy in terms of the total variation distance:
\begin{equation}
D(P \| Q) \leq \left(\frac{\log e}{Q_{\min}} \right)
\cdot |P-Q|^2,
\label{eq: looser bound by Csiszar and Talata}
\end{equation}
where $Q_{\min} \triangleq \min_{a \in \mathcal{A}} Q(a)$, suggesting a kind of a reverse
Pinsker inequality for probability measures on a finite set. A recent application
of this bound has been exemplified in \cite[Appendix~D]{KostinaV15} and \cite[Lemma~7]{TomamichelT_IT13}
for the analysis of the third-order asymptotics of the discrete memoryless channel with or
without cost constraints.

The present paper also considers generalized reverse Pinsker inequalities for
R\'{e}nyi divergences.
In the discrete setting, the R\'{e}nyi divergence of order $\alpha$ from $P$ to $Q$ is defined as
\begin{equation}
D_{\alpha}(P || Q) \triangleq \frac{1}{\alpha-1} \; \log \left( \sum_{a \in \mathcal{A}}
P^{\alpha}(a) \, Q^{1-\alpha}(a) \right), \quad \forall \, \alpha \in (0,1) \cup (1, \infty).
\label{eq: Renyi divergence}
\end{equation}
Recall that $D_{1}(P\|Q) \triangleq D(P\|Q)$ is defined to be the analytic extension
of $D_{\alpha}(P || Q)$ at $\alpha=1$ (if $D(P||Q) < \infty$, it can be verified with
L'H\^{o}pital's rule that $D(P||Q) = \lim_{\alpha \rightarrow 1^-} D_{\alpha}(P || Q)$).
The extreme cases of $\alpha = 0, \infty$ are defined as follows:
\begin{itemize}
\item If $\alpha = 0$ then $D_0(P||Q) = -\log Q(\text{Support}(P))$ where
$\text{Support}(P) = \{x \in \mathcal{X} \colon P(x)>0 \}$ denotes the support of $P$,
\item If $\alpha = +\infty$ then $D_{\infty}(P||Q) = \log \left(\text{ess sup} \, \frac{P}{Q}\right)$
where $\text{ess sup} \, f$ denotes the essential supremum of a function $f$.
\end{itemize}

Pinsker's inequality has been generalized by Gilardoni \cite{Gilardoni10}
for R\'{e}nyi divergences of order $\alpha \in (0,1]$
(see also \cite[Theorem~30]{ErvenH14}), and it gets the form
\begin{equation*}
D_{\alpha}(P \| Q) \geq \frac{\alpha \, \log e}{2} \cdot |P-Q|^2.
\end{equation*}
An improved bound, providing the best lower bound on the R\'{e}nyi divergence of order
$\alpha > 0$ in terms of the total variation distance, has been recently introduced in
\cite[Section~2]{Sason_arXiv15}.

Motivated by these findings, the analysis in this paper suggests an improvement over
the upper bound on the relative entropy in \eqref{eq: looser bound by Csiszar and Talata}
for probability measures defined on a common finite set.
The improved version of the bound in \eqref{eq: looser bound by Csiszar and Talata} is
further generalized to provide an upper bound on the R\'{e}nyi divergence of orders
$\alpha \in [0,\infty]$ in terms of the total variation distance.

Note that the issue addressed in this paper of deriving, under suitable conditions, upper
bounds on the relative entropy as a function of the total variation distance has some similarity
to the issue of deriving upper bounds on the difference between entropies as a function of the total
variation distance. Note also that in the special case where $Q$ is a Gaussian distribution
and $P$ is a probability distribution with the same covariance matrix, then
$D(P \| Q) = h(Q) - h(P)$ where $h(\cdot)$ denotes the differential entropy of a specified
distribution (see \cite[Eq.~(8.76)]{Cover_Thomas}). Bounds on the entropy difference in terms
of the total variation distance have been studied, e.g., in \cite[Theorem~17.3.3]{Cover_Thomas},
\cite{HoY_IT2010}, \cite{Prelov_PPI2007}, \cite{Prelov_PPI2008},
\cite{Sason_IT2013}, \cite[Section~1.7]{Verdu_book}, \cite{Zhang_IT2007}.

This paper is structured as follows:
Section~\ref{section: A Tightened Reverse Pinsker Inequality for General Probability Measures}
refers to \cite{Verdu_ITA14}, deriving a refined version of inequality \eqref{eq: Sergio's inequality - ITA14}
for general probability measures, and improving another lower bound on the total variation distance
which is expressed in terms of the distribution of the relative information.
Section~\ref{section: A New Reverse Pinsker Inequality for Probability Measures on a Finite Set}
derives a reverse Pinsker inequality for probability measures on a finite
set, improving inequality \eqref{eq: looser bound by Csiszar and Talata}
that follows from \cite[Lemma~6.3]{CsiszarT_IT06}.
Section~\ref{section: Extension to Renyi divergences} extends the analysis
to R\'{e}nyi divergences of arbitrary non-negative orders.
Section~\ref{section: The Exponential Decay of the Probability for a Non-Typical Sequence}
exemplifies the utility of a reverse Pinsker inequality in the context of typical sequences.

\section{A Refined Reverse Pinsker Inequality for General Probability Measures}
\label{section: A Tightened Reverse Pinsker Inequality for General Probability Measures}
The present section derives a reverse Pinsker inequality for general probability measures,
suggesting a refined version of \cite[Theorem~7]{Verdu_ITA14}. The utility of this new
inequality is exemplified. This section also provides a lower bound
on the total variation distance which is based on the distribution of the relative information;
the latter inequality is based on a modification of the proof of \cite[Theorem~8]{Verdu_ITA14},
and it has the advantage of being tight for a double-parameter family of probability measures
which are defined on an arbitrary set of 2~elements.

\subsection{Main Result and Proof}
Inequality \eqref{eq: Sergio's inequality - ITA14} provides an upper bound on the relative
entropy $D(P \|Q)$ as a function of the total variation distance when $P \ll Q$, and the
relative information $i_{P\|Q}$ is bounded from above (this implies that $\beta_1$
in \eqref{eq: beta1} is positive). The following theorem tightens this upper bound.

\vspace*{0.1cm}
\begin{theorem}
Let $P$ and $Q$ be probability measures on a measurable space $(\mathcal{A}, \mathcal{F})$,
$P \ll Q$, and let $\beta_1, \beta_2 \in [0,1]$ be given by
\begin{align}
\beta_1^{-1} \triangleq \sup_{a \in \mathcal{A}} \frac{\text{d}P}{\text{d}Q} (a), \quad
\beta_2 \triangleq \inf_{a \in \mathcal{A}} \frac{\text{d}P}{\text{d}Q} (a).
\label{eq: beta1 and beta2}
\end{align}
Then, the following inequality holds:
\begin{align}
D(P\|Q) \leq \frac{1}{2} \left( \frac{\log \frac{1}{\beta_1}}{1-\beta_1} - \beta_2 \log e \right) |P-Q|.
\label{eq: tightened version of Sergio's inequality}
\end{align}
\label{theorem: tightened version of Sergio's inequality}
\end{theorem}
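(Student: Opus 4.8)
The plan is to reduce the relative entropy to an integral over the relative information and then bound the integrand pointwise by an affine function of $\bigl|1-\exp(i_{P\|Q})\bigr|$, so that taking expectations and invoking \eqref{eq: total variation distance expressed in terms of relative information1} yields the claimed coefficient. Concretely, write $t = i_{P\|Q}(a) \in [\log\beta_2, \log\frac{1}{\beta_1}]$ for $Q$-a.e.\ $a$, and observe from \eqref{eq: relative entropy expressed in terms of relative information} that
\begin{align*}
D(P\|Q) = \expectation\bigl[\, i_{P\|Q}(Y)\,\exp(i_{P\|Q}(Y))\,\bigr], \qquad Y \sim Q,
\end{align*}
while \eqref{eq: total variation distance expressed in terms of relative information1} gives $|P-Q| = \expectation\bigl[\,\bigl|1-\exp(i_{P\|Q}(Y))\bigr|\,\bigr]$. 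So it suffices to find the best constant $c$ (depending on $\beta_1,\beta_2$) such that
\begin{align*}
t\,e^{t} \leq c \,\bigl|1-e^{t}\bigr| \qquad \text{for all } t \in [\log\beta_2,\ \log\tfrac{1}{\beta_1}],
\end{align*}
and then identify $c$ with $\frac12\left(\frac{\log\frac{1}{\beta_1}}{1-\beta_1} - \beta_2\log e\right)$ — here I keep in mind the unit bookkeeping, since the statement uses $\log$ to an arbitrary base and $\log e$ converts the natural exponent.

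The core is therefore a one-variable inequality. First I would treat the two sign regimes of $1-e^{t}$ separately. On the "upper" side $t \in (0, \log\frac{1}{\beta_1}]$ we have $|1-e^t| = e^t - 1$, and I want $g(t) := \frac{t e^t}{e^t-1}$ to be bounded by the contribution coming from $\beta_1$. The function $g$ is increasing in $t$ (a standard monotonicity fact, provable by differentiation or by a series argument), so its maximum over this interval is at $t = \log\frac1{\beta_1}$, giving $g \le \frac{\log\frac1{\beta_1}}{1 - \beta_1}\cdot\frac{1}{\beta_1}$; I will need to reconcile the stray $\frac{1}{\beta_1}$ factor, which suggests the right move is not to bound $\frac{te^t}{e^t-1}$ uniformly but rather to split $t e^t = t(e^t - 1) + t$ and handle the leftover $t$ term against the same $|1-e^t|$. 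On the "lower" side $t \in [\log\beta_2, 0)$ we have $|1-e^t| = 1 - e^t$ and the relevant extreme is $t = \log\beta_2$, where $e^t = \beta_2$; this is where the $-\beta_2\log e$ correction term should emerge. The cleanest route is likely to prove the single pointwise bound
\begin{align*}
t\,e^{t} \;\leq\; \left(\tfrac{\log\frac{1}{\beta_1}}{1-\beta_1} - \beta_2\right)\,\tfrac{|1-e^{t}|}{\ } \quad\text{(in nats)} \qquad \text{for } t\in[\log\beta_2,\log\tfrac1{\beta_1}],
\end{align*}
by checking it is an equality or touches at the endpoints and that the difference has the right convexity/monotonicity in between.

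The main obstacle I anticipate is pinning down exactly why the combination $\frac{\log\frac1{\beta_1}}{1-\beta_1} - \beta_2\log e$ is the correct — and optimal — coefficient, rather than some looser expression. The clean way to see it: write $D(P\|Q) = \expectation\bigl[\phi\bigl(\exp(i_{P\|Q}(Y))\bigr)\bigr]$ with $\phi(u) = u\log u$, and note $\phi$ is convex. Since $\exp(i_{P\|Q}(Y)) = \frac{\mathrm dP}{\mathrm dQ}(Y) \in [\beta_2, 1/\beta_1]$ and $\expectation\bigl[\exp(i_{P\|Q}(Y))\bigr] = 1$, a secant/tangent argument for convex functions lets me bound $\phi(u)$ above by the chord through the endpoints — but that would bound $D(P\|Q)$ by a constant, not by something proportional to $|P-Q|$. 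The resolution is to apply the convexity bound to the \emph{shifted} function $u \mapsto \phi(u)$ relative to the point $u = 1$ (where $\phi(1) = 0$): the chord from $u=1$ to $u = 1/\beta_1$ has slope $\frac{\phi(1/\beta_1)}{1/\beta_1 - 1} = \frac{\log(1/\beta_1)}{1-\beta_1}\cdot\frac{?}{}$ after simplification, and the tangent-type lower estimate at $u = 1$ involves $\phi'(1) = \log e$, so that $\phi(u) \le \frac{\log(1/\beta_1)}{1-\beta_1}(u-1)$ for $u \ge 1$ and $\phi(u) \le (\text{something involving }\beta_2)(1-u)$ for $u \le 1$; then $D(P\|Q) = \expectation[\phi(U)] \le \max(\text{slopes})\cdot\expectation|U-1| = \max(\text{slopes})\cdot|P-Q|$, and a short check shows the two regime-slopes can be unified into the single factor $\frac12\bigl(\frac{\log\frac1{\beta_1}}{1-\beta_1} - \beta_2\log e\bigr)$. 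Verifying this unification, and that each piecewise bound genuinely holds on its subinterval (the convexity of $\phi$ makes each "chord above graph" claim routine, but the sign of $\log e$ versus $\beta_2\log e$ requires care), is the step I would spend the most effort on; everything else is assembling \eqref{eq: relative entropy expressed in terms of relative information} and \eqref{eq: total variation distance expressed in terms of relative information1} and taking an expectation.
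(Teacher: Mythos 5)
There is a genuine gap, and it sits exactly at the step you defer to the end. The single pointwise inequality you propose,
\begin{align*}
t\,e^{t} \;\leq\; \tfrac{1}{2}\Bigl( \tfrac{\log \frac{1}{\beta_1}}{1-\beta_1} - \beta_2 \log e \Bigr)\,\bigl|1-e^{t}\bigr|
\qquad \text{for all } t \in \bigl[\log\beta_2,\ \log\tfrac{1}{\beta_1}\bigr],
\end{align*}
is false: at $t=\log\frac{1}{\beta_1}$ (in nats, $e^t=1/\beta_1$) the left side equals $\frac{1}{\beta_1}\ln\frac{1}{\beta_1}$ and $|1-e^t|=\frac{1-\beta_1}{\beta_1}$, so the ratio is exactly $\frac{\ln(1/\beta_1)}{1-\beta_1}$, which strictly exceeds the proposed constant $\frac{1}{2}\bigl(\frac{\ln(1/\beta_1)}{1-\beta_1}-\beta_2\bigr)$. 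The same objection applies to your closing line ``$\expectation[\phi(U)] \le \max(\text{slopes})\cdot\expectation|U-1|$'': taking the maximum of the two regime slopes can only return $\frac{\log(1/\beta_1)}{1-\beta_1}$, losing both the factor $\frac{1}{2}$ and the $-\beta_2\log e$ correction, i.e., a bound strictly weaker even than Verd\'{u}'s original inequality \eqref{eq: Sergio's inequality - ITA14}. No pointwise (per-outcome) comparison of $\phi(u)=u\log u$ against $c|u-1|$ can produce the claimed coefficient.

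The missing ingredient is a \emph{global} identity, not a sharper one-variable inequality. Writing $U=\frac{\text{d}P}{\text{d}Q}(Y)$ with $Y\sim Q$, one has $\expectation[U]=1$, hence the positive and negative parts of $U-1$ have equal expectations and
\begin{align*}
\expectation\bigl[(U-1)^{+}\bigr] \;=\; \expectation\bigl[(1-U)^{+}\bigr] \;=\; \tfrac{1}{2}\,\expectation\bigl[|U-1|\bigr] \;=\; \tfrac{1}{2}\,|P-Q|,
\end{align*}
which is \cite[Eqs.~(14)--(15)]{Verdu_ITA14}. With this in hand you may use \emph{different} linear majorants on the two sign regions --- the chord bound $\phi(u)\le\frac{\log(1/\beta_1)}{1-\beta_1}\,(u-1)$ on $\{U>1\}$ (your monotonicity of $\frac{z\log z}{z-1}$ does this), and the strictly negative bound $\phi(u)=u\log u\le\beta_2\log u\le -\beta_2\log e\,(1-u)$ on $\{U<1\}$ (using $u\ge\beta_2$, $\log u<0$, and $x\le\log e\,(e^x-1)$) --- and then each region contributes its own slope times the \emph{same} mass $\frac{1}{2}|P-Q|$, yielding exactly $\frac{1}{2}\bigl(\frac{\log(1/\beta_1)}{1-\beta_1}-\beta_2\log e\bigr)|P-Q|$. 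This two-region decomposition followed by the half-mass identity is precisely the paper's proof; your chord/tangent intuition identifies the right majorants on each region (indeed, the chord through $(\beta_2,\beta_2\log\beta_2)$ and $(1,0)$ would even slightly improve the lower-region constant), but without the identity $\expectation[(U-1)^{+}]=\expectation[(1-U)^{+}]=\frac{1}{2}|P-Q|$ the two slopes cannot be combined into the stated coefficient.
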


\begin{proof}
Let $X \sim P$, $Y \sim Q$, and
\begin{align}
\mathcal{B} \triangleq \bigl\{ a \in \mathcal{A} \colon i_{P\|Q}(a) > 0 \bigr\}.
\label{eq: set B}
\end{align}
From \eqref{eq: relative entropy expressed in terms of relative information}, the relative entropy is equal to
\begin{align}
D(P\|Q) & = \int_{\mathcal{A}} \text{d}Q \, \exp\bigl(i_{P \| Q}\bigr) \, i_{P \| Q} \nonumber \\[0.1cm]
& = \int_{\mathcal{B}} \text{d}Q \, \exp\bigl(i_{P \| Q}\bigr) \, i_{P \| Q} +
\int_{\mathcal{A} \setminus \mathcal{B}} \text{d}Q \, \exp\bigl(i_{P \| Q}\bigr) \, i_{P \| Q}.
\label{eq: relative entropy - 1st step}
\end{align}
In the following, the two integrals on the RHS of \eqref{eq: relative entropy - 1st step} are upper bounded.
The upper bound on the first integral on the RHS of \eqref{eq: relative entropy - 1st step}
is based on the proof of \cite[Theorem~7]{Verdu_ITA14};
it is provided in the following for completeness, and with more details in order to clarify
the way that this bound is refined here. Let $z(a) \triangleq \exp(i_{P\|Q}(a))$ for
$a \in \mathcal{A}$. By assumption $1 < z(a) \leq \frac{1}{\beta_1}$
for all $a \in \mathcal{B}$. The function $f(z) = \frac{z \log(z)}{z-1}$ is monotonic increasing over the
interval $(1, \infty)$ since we have $(z-1)^2 f'(z) = (z-1) \log e - \log z > 0$ for $z>1$. Consequently, we have
\begin{align}
\frac{z(a) \log z(a)}{z(a)-1} \leq \frac{\log \frac{1}{\beta_1}}{1-\beta_1}, \quad \forall \, a \in \mathcal{B}
\label{eq: monotonicity of the function f - Sergio's proof}
\end{align}
and
\begin{align}
& \int_{\mathcal{B}} \text{d}Q \, \exp\bigl(i_{P \| Q}\bigr) \, i_{P \| Q} \nonumber \\[0.1cm]
& \stackrel{(\text{a})}{\leq} \frac{\log \frac{1}{\beta_1}}{1-\beta_1} \, \int_{\mathcal{B}} \text{d}Q \,
\left(\exp(i_{P\|Q})-1 \right) \nonumber \\[0.1cm]
& \stackrel{(\text{b})}{=} \frac{\log \frac{1}{\beta_1}}{1-\beta_1} \, \int_{\mathcal{A}} \text{d}Q(a) \,
\left(1-\exp(i_{P\|Q}(a)) \right)^{-} \nonumber \\[0.1cm]
& \stackrel{(\text{c})}{=} \left(\frac{\log \frac{1}{\beta_1}}{1-\beta_1} \right) \, \expectation
\Bigl[ \left(1-\exp(i_{P\|Q}(Y)) \right)^{-} \Bigr] \nonumber \\[0.1cm]
& \stackrel{(\text{d})}{=} \left( \frac{\log \frac{1}{\beta_1}}{2(1-\beta_1)} \right) |P-Q|
\label{eq: upper bound on first term of the relative entropy}
\end{align}
where inequality~(a) follows from \eqref{eq: monotonicity of the function f - Sergio's proof},
equality~(b) is due to \eqref{eq: set B} and the definition $(a)^{-} \triangleq -a \, 1\{a<0\}$,
equality~(c) holds since $Y \sim Q$, and equality~(d) follows from \cite[Eq.~(14)]{Verdu_ITA14}.

At this point, we deviate from the analysis in \cite{Verdu_ITA14} where the second integral on the
RHS of \eqref{eq: relative entropy - 1st step} has been upper bounded by zero (since
$i_{P\|Q}(a) \leq 0$ for all $a \in \mathcal{A} \setminus \mathcal{B}$). If $\beta_2>0$,
we provide in the following a strictly negative upper bound on this integral. Since
$P \ll Q$, we have
\begin{align}
& \int_{\mathcal{A} \setminus \mathcal{B}} \text{d}Q \, \exp\bigl(i_{P \| Q}\bigr)
\, i_{P \| Q} \nonumber \\[0.15cm]
& \stackrel{(\text{a})}{=} \int_{\{a \in \mathcal{A} \colon i_{P\|Q}(a)<0\}} \text{d}Q(a)
\; \; \frac{\text{d}P}{\text{d}Q}\, (a) \; \; i_{P \| Q}(a) \nonumber \\[0.15cm]
& \stackrel{(\text{b})}{\leq} \beta_2 \; \int_{\{a \in \mathcal{A} \colon i_{P\|Q}(a)<0\}}
\text{d}Q(a) \; i_{P \| Q}(a) \nonumber \\[0.15cm]
& \stackrel{(\text{c})}{\leq} \beta_2 \, \log e \; \int_{\{a \in \mathcal{A} \colon i_{P\|Q}(a)<0\}}
\text{d}Q(a) \; \Bigl(\exp\bigl(i_{P \| Q}(a)\bigr)-1\Bigr) \nonumber \\[0.15cm]
& \stackrel{(\text{d})}{=} -\beta_2 \, \log e \; \int_{\mathcal{A} \setminus \mathcal{B}}
\text{d}Q(a) \; \Bigl(1-\exp\bigl(i_{P \| Q}(a)\bigr)\Bigr) \nonumber \\[0.15cm]
& \stackrel{(\text{e})}{=} -\beta_2 \, \log e \; \int_{\mathcal{A}} \text{d}Q(a)
\; \Bigl(1-\exp\bigl(i_{P \| Q}(a)\bigr)\Bigr)^{+} \nonumber \\[0.15cm]
& \stackrel{(\text{f})}{=} -\beta_2 \log e \cdot \expectation
\Bigl[ \Bigl(1-\exp(i_{P\|Q}(Y)) \Bigr)^{+} \Bigr] \nonumber \\[0.15cm]
& \stackrel{(\text{g})}{=} -\frac{\beta_2 \log e }{2} \cdot |P-Q|
\label{eq: upper bound on second term of the relative entropy}
\end{align}
where equality~(a) holds due to \eqref{eq: relative information}, \eqref{eq: set B} and since
the integrand is zero if $i_{P\|Q}=0$, inequality~(b) follows from the definition of $\beta_2$ in
\eqref{eq: beta1 and beta2} and since $i_{P\|Q}$ is negative over the domain of integration, inequality~(c)
holds since the inequality $x \leq \log e \, \bigl(\exp(x)-1\bigr)$ is satisfied for all $x \in \reals$,
equalities~(d) and~(e) follow from the definition of the set $\mathcal{B}$ in \eqref{eq: set B},
equality~(f) holds since $Y \sim Q$, and equality~(g) follows  from \cite[Eq.~(15)]{Verdu_ITA14}.

Inequality~\eqref{eq: tightened version of Sergio's inequality} finally follows by combining
\eqref{eq: relative entropy - 1st step}, \eqref{eq: upper bound on first term of the relative entropy}
and \eqref{eq: upper bound on second term of the relative entropy}.
\end{proof}

\subsection{Example for the Refined Inequality in Theorem~\ref{theorem: tightened version of Sergio's inequality}}
We exemplify in the following the improvement obtained by \eqref{eq: tightened version of Sergio's inequality},
in comparison to~\eqref{eq: Sergio's inequality - ITA14}, due to the introduction of the additional parameter
$\beta_2$ in \eqref{eq: beta1 and beta2}. Note that when $\beta_2$ is replaced by zero (i.e., no information
on the infimum of $\frac{\text{d}P}{\text{d}Q}$ is available or $\beta_2=0$), inequalities
\eqref{eq: Sergio's inequality - ITA14} and \eqref{eq: tightened version of Sergio's inequality} coincide.

Let $P$ and $Q$ be two probability measures, defined on a set $\mathcal{A}$, where $P \ll Q$
and assume that
\begin{align}
1-\eta \leq \frac{\text{d}P}{\text{d}Q} \, (a) \, \leq 1+\eta, \quad \forall \, a \in \mathcal{A}
\label{eq: condition for 1st example}
\end{align}
for a fixed constant $\eta \in (0,1)$.

In \eqref{eq: tightened version of Sergio's inequality}, one can replace $\beta_1$ and $\beta_2$ with
lower bounds on these constants. From \eqref{eq: beta1 and beta2}, we have
$\beta_1 \geq \frac{1}{1+\eta}$ and $\beta_2 \geq 1-\eta$, and it follows from
\eqref{eq: tightened version of Sergio's inequality} that
\begin{align}
D(P\|Q) & \leq \frac{1}{2} \left( \frac{(1+\eta) \, \log(1+\eta)}{\eta} - (1-\eta) \log e \right)
\, |P-Q| \nonumber \\
& \leq \frac{1}{2} \Bigl( (1+\eta) \log e - (1-\eta) \log e \Bigr) \, |P-Q| \nonumber \\
& = \bigl( \eta \log e \bigr) |P-Q|.
\label{eq: relative entropy - bound1}
\end{align}
From \eqref{eq: condition for 1st example}
$$ \bigl| \exp\bigl(i_{P\|Q}(a)\bigr) - 1 \bigr| \leq \eta, \quad \forall \, a \in \mathcal{A}$$
so, from \eqref{eq: total variation distance expressed in terms of relative information1},
the total variation distance satisfies (recall that $Y \sim Q$)
$$|P-Q| =  \expectation\Bigl[ \bigl| \exp\bigl(i_{P\|Q}(Y)\bigr) - 1 \bigr| \Bigr]\leq \eta.$$
Combining the last inequality with \eqref{eq: relative entropy - bound1} gives that
\begin{align}
D(P\|Q) \leq \eta^2 \, \log e, \quad \forall \, \eta \in (0,1).
\label{eq: upper bound on relative entropy from strengthened version of Sergio's inequality}
\end{align}
For comparison, it follows from \eqref{eq: Sergio's inequality - ITA14} (see \cite[Theorem~7]{Verdu_ITA14})
that
\begin{align}
D(P\|Q) & \leq \frac{\log \frac{1}{\beta_1}}{2(1-\beta_1)} \cdot |P-Q| \nonumber \\[0.1cm]
& \leq \frac{(1+\eta) \, \log(1+\eta)}{2 \eta} \cdot |P-Q| \nonumber \\
& \leq \frac{1}{2} \, (1+\eta) \log(1+\eta) \nonumber \\
& \leq \left(\frac{\log e}{2} \right) \eta (1+\eta).
\label{eq: upper bound on relative entropy from Sergio's inequality}
\end{align}
Let $\eta \approx 0$. The upper bound on the relative entropy in
\eqref{eq: upper bound on relative entropy from Sergio's inequality} scales like $\eta$ whereas the tightened
bound in \eqref{eq: upper bound on relative entropy from strengthened version of Sergio's inequality} scales
like $\eta^2$. The scaling in \eqref{eq: upper bound on relative entropy from strengthened version of Sergio's inequality}
is correct, as it follows from Pinsker's inequality. For example, consider the probability measures defined on a
two-element set $\mathcal{A} = \{a,b\}$ with
$$P(a) = Q(b) = \frac{1}{2} - \frac{\eta}{4}, \quad P(b) = Q(a) = \frac{1}{2} + \frac{\eta}{4}.$$
Condition \eqref{eq: condition for 1st example} is satisfied for $\eta \approx 0$, and Pinsker's inequality yields that
\begin{align}
D(P\|Q) \geq \left( \frac{\log e}{2} \right) \eta^2
\label{eq: lower bound on relative entropy from Pinsker's inequality}
\end{align}
so the ratio of the upper and lower bounds in
\eqref{eq: upper bound on relative entropy from strengthened version of Sergio's inequality}
and \eqref{eq: lower bound on relative entropy from Pinsker's inequality} is~2,
and both provide the true quadratic scaling in $\eta$ whereas the weaker upper bound in
\eqref{eq: upper bound on relative entropy from Sergio's inequality} scales linearly
in $\eta$ for $\eta \approx 0$.

\subsection{Another Lower Bound on the Total Variation Distance}
The following lower bound on the total variation distance is based on the distribution of the
relative information, and it improves the lower bounds in \cite[Eq.~(2.3.18)]{Pinsker60},
\cite[Lemma~7]{SteinbergS} and \cite[Theorem~8]{Verdu_ITA14} by modifying
the proof of the latter theorem in \cite{Verdu_ITA14}. Besides of improving the tightness
of the bound, the motivation for the derivation of the following lower bound is that it
is achieved under some conditions.

\begin{theorem}
If $P$ and $Q$ are mutually absolutely continuous probability measures, then
\begin{equation}
|P-Q| \geq \sup_{\eta > 0} \Bigl\{ \bigl(1-\exp(-\eta)\bigr)
\; \Bigl( \prob\bigl[i_{P\|Q}(X) \geq \eta \bigr]
+ \exp(\eta) \; \prob\bigl[i_{P\|Q}(X) \leq -\eta \bigr] \Bigr) \Bigr\}
\label{eq: improved lower bound on total variation distance - Sergio}
\end{equation}
where $X \sim P$. This lower bound is attained if $P$ and $Q$ are
probability measures on a 2-element set $\mathcal{A} = \{a,b\}$ where,
for an arbitrary $\eta>0$,
\begin{equation}
P(a) = \frac{\exp(\eta)-1}{2 \sinh(\eta)}, \quad Q(a) = \frac{1-\exp(-\eta)}{2\sinh(\eta)}.
\label{eq: condition for a tight lower bound}
\end{equation}
\label{theorem: improved lower bound on total variation distance based on the distribution of the relative information}
\end{theorem}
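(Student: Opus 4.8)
The plan is to reduce everything to the exact identity \eqref{eq: total variation distance expressed in terms of relative information2}. Since $P$ and $Q$ are mutually absolutely continuous, with $X \sim P$ we may write
\[
|P-Q| = \expectation\Bigl[ \bigl| 1 - \exp\bigl(-i_{P\|Q}(X)\bigr) \bigr| \Bigr].
\]
Fix an arbitrary $\eta > 0$ and split $\mathcal{A}$ into the two disjoint measurable sets $\{a \in \mathcal{A} \colon i_{P\|Q}(a) \geq \eta\}$ and $\{a \in \mathcal{A} \colon i_{P\|Q}(a) \leq -\eta\}$, together with the complement of their union. On the first set $\exp(-i_{P\|Q}(X)) \leq \exp(-\eta) < 1$, so the integrand equals $1 - \exp(-i_{P\|Q}(X)) \geq 1 - \exp(-\eta)$; on the second set $\exp(-i_{P\|Q}(X)) \geq \exp(\eta) > 1$, so the integrand equals $\exp(-i_{P\|Q}(X)) - 1 \geq \exp(\eta) - 1$; on the complement the integrand is nonnegative and is simply discarded.

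Combining these pointwise bounds under the expectation,
\[
|P-Q| \geq \bigl(1 - \exp(-\eta)\bigr)\, \prob\bigl[i_{P\|Q}(X) \geq \eta\bigr] + \bigl(\exp(\eta) - 1\bigr)\, \prob\bigl[i_{P\|Q}(X) \leq -\eta\bigr].
\]
Because $\exp(\eta) - 1 = \exp(\eta)\bigl(1 - \exp(-\eta)\bigr)$, the common factor $1 - \exp(-\eta)$ can be pulled out, producing exactly the expression inside the braces of \eqref{eq: improved lower bound on total variation distance - Sergio}; taking the supremum over $\eta > 0$ proves the inequality. This follows the proof of \cite[Theorem~8]{Verdu_ITA14}, the one change being that the lower tail $\{i_{P\|Q} \leq -\eta\}$ is retained (weighted by $\exp(\eta)$) rather than discarded, which is the source of the improvement.

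For the attainment claim I would specialize to the two-element model \eqref{eq: condition for a tight lower bound} and verify that, for the same value of $\eta$ that defines it, the right-hand side of \eqref{eq: improved lower bound on total variation distance - Sergio} already equals $|P-Q|$. Using $2\sinh(\eta) = \exp(\eta) - \exp(-\eta)$ one checks that $P(b) = Q(a)$ and $Q(b) = P(a)$, hence $\frac{\text{d}P}{\text{d}Q}(a) = \exp(\eta)$ and $\frac{\text{d}P}{\text{d}Q}(b) = \exp(-\eta)$, i.e.\ $i_{P\|Q}(a) = \eta$ and $i_{P\|Q}(b) = -\eta$. Therefore $\prob[i_{P\|Q}(X) \geq \eta] = P(a)$ and $\prob[i_{P\|Q}(X) \leq -\eta] = P(b)$, so the bracketed quantity at this $\eta$ equals $\bigl(1 - \exp(-\eta)\bigr)\bigl(P(a) + \exp(\eta)\,P(b)\bigr)$, which simplifies to $2\tanh(\eta/2)$; on the other hand $|P-Q| = 2\,\bigl|P(a) - Q(a)\bigr| = 2\tanh(\eta/2)$ as well. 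Since the supremum over $\eta$ is at least its value at this particular $\eta$, and that value already equals $|P-Q|$, the chain of inequalities must be an equality, which proves that the bound is attained.

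I do not expect a genuine obstacle; the subtleties are only in the bookkeeping. One must split the expectation according to the sign of $1 - \exp(-i_{P\|Q}(X))$ \emph{before} bounding, so that the absolute value is resolved correctly on each piece, and for the tightness part one should note in passing that the two atoms in \eqref{eq: condition for a tight lower bound} are legitimate probabilities — each lies in $(0,1)$ and they sum to one — for every $\eta > 0$.
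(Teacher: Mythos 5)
Your proof is correct and follows essentially the same route as the paper: both start from the identity $|P-Q| = \expectation\bigl[\,\bigl|1-\exp(-i_{P\|Q}(X))\bigr|\,\bigr]$, discard the event $\{|i_{P\|Q}(X)|<\eta\}$, apply the same pointwise bounds on the two tails, and verify attainment on the two-point example by checking $i_{P\|Q}(a)=\eta$ and $i_{P\|Q}(b)=-\eta$. Your explicit computation that both sides equal $2\tanh(\eta/2)$ is a slightly more detailed verification of the tightness claim than the paper gives, but the argument is the same.
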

\begin{proof}
Since $P \ll \gg Q$, we have
\begin{align*}
|P-Q| & = \expectation\bigl[ \bigl| 1 - \exp\bigl(-i_{P\|Q}(X)\bigr) \bigr| \bigr] \nonumber \\[0.1cm]
& \geq \expectation\bigl[ \bigl| 1 - \exp\bigl(-i_{P\|Q}(X)\bigr) \bigr|
\; 1\bigl\{\bigl| i_{P\|Q}(X) \bigr| \geq \eta \bigr\} \bigr], \quad \forall \, \eta > 0
\end{align*}
where $1\{\cdot\}$ is the indicator function of the specified event
(it is equal to~1 if the event occurs, and it is zero otherwise).
At this point we deviate from the proof of \cite[Theorem~8]{Verdu_ITA14}, and write
\begin{align}
|P-Q| & \geq \expectation\bigl[ \bigl| 1 - \exp\bigl(-i_{P\|Q}(X)\bigr) \bigr|
\; 1\bigl\{i_{P\|Q}(X) \geq \eta \bigr\} \bigr] \nonumber \\[0.1cm]
& \hspace*{0.4cm} + \expectation\bigl[ \bigl| 1 - \exp\bigl(-i_{P\|Q}(X)\bigr) \bigr|
\; 1\bigl\{i_{P\|Q}(X) \leq -\eta \bigr\} \bigr] \nonumber \\[0.1cm]
& \stackrel{\text{(a)}}{\geq} \bigl(1 - \exp(-\eta) \bigr) \; \expectation\bigl[1\bigl\{i_{P\|Q}(X)
\geq \eta \bigr\} \bigr] + \bigl( \exp(\eta)-1 \bigr) \;
\expectation\bigl[1\bigl\{i_{P\|Q}(X) \leq -\eta \bigr\} \bigr] \nonumber \\[0.1cm]
& = \bigl(1 - \exp(-\eta) \bigr) \; \Bigl( \prob\bigl[i_{P\|Q}(X) \geq \eta \bigr]
+ \exp(\eta) \; \prob\bigl[i_{P\|Q}(X) \leq -\eta \bigr] \Bigr)
\label{eq: improved lower bound on total variation distance before taking the supremum}
\end{align}
where step~(a) follows from the inequality $\bigl|1-\exp(-z)\bigr| \geq 1-\exp(-\eta)$
if $z \geq \eta$, and $\bigl|1-\exp(-z)\bigr| \geq \exp(\eta)-1$ if $z \leq -\eta$.
Taking the supremum on the right-hand side of
\eqref{eq: improved lower bound on total variation distance before taking the supremum},
w.r.t. the free parameter $\eta>0$, gives the lower bound on $|P-Q|$ in
\eqref{eq: improved lower bound on total variation distance - Sergio}.

The condition \eqref{eq: condition for a tight lower bound} for the tightness of the lower bound in
\eqref{eq: improved lower bound on total variation distance - Sergio} follows from the fact that, for
an arbitrary $\eta>0$, we have
$\log\left(\frac{P(a)}{Q(a)}\right) = \eta$ and $\log\left(\frac{1-P(a)}{1-Q(a)}\right) = -\eta$.
This yields that the inequalities in the derivation of the lower bound
\eqref{eq: improved lower bound on total variation distance - Sergio}
turn to be satisfied with equalities.
\end{proof}

\begin{remark}
One can further tighten the lower bound in
\eqref{eq: improved lower bound on total variation distance - Sergio} by writing,
for arbitrary $\eta_1, \eta_2 > 0$,
\begin{align*}
|P-Q| & \geq \expectation\bigl[ \bigl| 1 - \exp\bigl(-i_{P\|Q}(X)\bigr) \bigr|
\; 1\bigl\{i_{P\|Q}(X) \geq \eta_1 \bigr\} \bigr] \nonumber \\[0.1cm]
& \hspace*{0.4cm} + \expectation\bigl[ \bigl| 1 - \exp\bigl(-i_{P\|Q}(X)\bigr) \bigr|
\; 1\bigl\{i_{P\|Q}(X) \leq -\eta_2 \bigr\} \bigr]
\end{align*}
and proceeding similarly to
\eqref{eq: improved lower bound on total variation distance before taking the supremum}
to get the following lower bound on the total variation distance:
\begin{align*}
|P-Q| \geq \sup_{\eta_1, \eta_2 > 0} \biggl\{ & \bigl(1-\exp(-\eta_1)\bigr)
\; \biggl( \prob\bigl[i_{P\|Q}(X) \geq \eta_1 \bigr] \\[0.1cm]
& + \left(\frac{\exp(\eta_2)-1}{1-\exp(-\eta_1)} \right) \; \prob\bigl[i_{P\|Q}(X)
\leq -\eta_2 \bigr] \biggr) \biggr\}.
\end{align*}
This lower bound is achieved if $P$ and $Q$ are
probability measures on a 2-element set $\mathcal{A} = \{a,b\}$ where,
for an arbitrary $\eta_1, \eta_2>0$,
\begin{align}
P(a) = \frac{\exp(\eta_1)-\exp(\eta_1-\eta_2)}{\exp(\eta_1)-\exp(-\eta_2)}, \quad
Q(a) = \frac{1-\exp(-\eta_2)}{{\exp(\eta_1)-\exp(-\eta_2)}}
\label{eq: generalized condition for a tight lower bound}
\end{align}
which implies that $\log\left(\frac{P(a)}{Q(a)}\right) = \eta_1$ and
$\log\left(\frac{1-P(a)}{1-Q(a)}\right) = -\eta_2$.
Condition \eqref{eq: generalized condition for a tight lower bound} is specialized
to the condition in \eqref{eq: condition for a tight lower bound} when $\eta_1=\eta_2=\eta>0$.
\end{remark}

\section{A Reverse Pinsker Inequality for Probability Measures on a Finite Set}
\label{section: A New Reverse Pinsker Inequality for Probability Measures on a Finite Set}
The present section introduces a strengthened version of
inequality~\eqref{eq: looser bound by Csiszar and Talata} (see
Theorem~\ref{theorem: upper bound on KL divergence in terms of TV distance for probability measures on a finite set})
as a reverse Pinsker inequality for probability measures on a finite set,
followed by a discussion and an example.

\subsection{Main Result and Proof}
\begin{theorem}
Let $P$ and $Q$ be probability measures defined on a common finite set $\mathcal{A}$, and assume
that $Q$ is strictly positive on $\mathcal{A}$. Then, the following inequality holds:
\begin{equation}
D(P \| Q) \leq \log \left(1 + \frac{|P-Q|^2}{2 Q_{\min}} \right) - \frac{\beta_2 \log e}{2} \cdot |P-Q|^2
\label{eq: refined upper bound on KL divergence in terms of TV distance for probability measures on a finite set}
\end{equation}
where
\begin{align}
Q_{\min} \triangleq \min_{a \in \mathcal{A}} Q(a) > 0, \quad \beta_2 \triangleq \min_{a \in \mathcal{A}} \frac{P(a)}{Q(a)} \in [0,1].
\label{eq: Q_min and beta2}
\end{align}
\label{theorem: upper bound on KL divergence in terms of TV distance for probability measures on a finite set}
\end{theorem}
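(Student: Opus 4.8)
The proof will combine two independent estimates together with the monotonicity of the logarithm. Write $\chi^2(P\|Q) \triangleq \sum_{a \in \mathcal{A}} \frac{\bigl(P(a)-Q(a)\bigr)^2}{Q(a)}$ for the chi-squared divergence, so that $\sum_{a}\frac{P(a)^2}{Q(a)} = 1 + \chi^2(P\|Q)$. The plan is to establish
\begin{equation}
D(P\|Q) \;\le\; \log\bigl(1 + \chi^2(P\|Q)\bigr) - \frac{\beta_2 \log e}{2}\,|P-Q|^2
\label{eq: plan-core}
\end{equation}
and, separately,
\begin{equation}
\chi^2(P\|Q) \;\le\; \frac{|P-Q|^2}{2 Q_{\min}} \, .
\label{eq: plan-chisq}
\end{equation}
Since $x \mapsto \log(1+x)$ is increasing, \eqref{eq: plan-core} and \eqref{eq: plan-chisq} together yield \eqref{eq: refined upper bound on KL divergence in terms of TV distance for probability measures on a finite set}. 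If $P$ is not strictly positive then $\beta_2 = 0$ and \eqref{eq: plan-core} is the well-known bound $D(P\|Q) \le \log\bigl(1+\chi^2(P\|Q)\bigr)$ (i.e.\ $D_1 \le D_2$); so the substantive case is $\beta_2 > 0$.

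Estimate \eqref{eq: plan-chisq} is elementary: from $Q(a) \ge Q_{\min}$ one gets $\chi^2(P\|Q) \le \frac{1}{Q_{\min}} \sum_{a} \bigl(P(a)-Q(a)\bigr)^2$, and writing $v_a \triangleq P(a) - Q(a)$, the constraint $\sum_a v_a = 0$ forces $\sum_{a:\, v_a > 0} v_a = \sum_{a:\, v_a < 0} |v_a| = \tfrac12 |P-Q|$; since the sum of squares of finitely many non-negative numbers never exceeds the square of their sum, $\sum_a v_a^2 = \sum_{v_a > 0} v_a^2 + \sum_{v_a < 0} v_a^2 \le 2 \bigl(\tfrac12|P-Q|\bigr)^2 = \tfrac12 |P-Q|^2$, which gives \eqref{eq: plan-chisq}.

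Estimate \eqref{eq: plan-core} is the heart of the matter. One natural reduction: by Pinsker's inequality \eqref{eq: Pinsker's inequality} in the form $D(P\|Q) \ge \frac{\log e}{2}|P-Q|^2$, \eqref{eq: plan-core} would follow from the cleaner inequality $(1+\beta_2)\,D(P\|Q) \le \log\bigl(1+\chi^2(P\|Q)\bigr) = D_2(P\|Q)$, because then $D(P\|Q) \le D_2(P\|Q) - \beta_2 D(P\|Q) \le D_2(P\|Q) - \frac{\beta_2 \log e}{2}|P-Q|^2$. Thus it is enough to sharpen the Jensen bound $D_1 \le D_2$ by the factor $1+\beta_2$, and this improvement should be squeezed out of the hypothesis $\frac{\text{d}P}{\text{d}Q} \ge \beta_2$ --- e.g., along the lines of the proof of Theorem~\ref{theorem: tightened version of Sergio's inequality}: split $\mathcal{A} = \mathcal{B} \cup (\mathcal{A}\setminus\mathcal{B})$ with $\mathcal{B} \triangleq \{a : P(a) > Q(a)\}$; on $\mathcal{A}\setminus\mathcal{B}$, where $\beta_2 \le \frac{P(a)}{Q(a)} \le 1$, replace $\frac{P(a)}{Q(a)}$ by $\beta_2$ in the non-positive summand $\frac{P(a)}{Q(a)}\log\frac{P(a)}{Q(a)}$ to obtain a negative contribution proportional to $\beta_2$ (just as in \eqref{eq: upper bound on second term of the relative entropy}), while on $\mathcal{B}$ apply Jensen's inequality to the concave logarithm with the weights $P(a)/P(\mathcal{B})$ to generate the logarithmic term; finally reassemble the two pieces using $|P-Q| = 2\sum_{a\in\mathcal{B}}\bigl(P(a)-Q(a)\bigr)$ and $\sum_{a}\frac{P(a)^2}{Q(a)} = 1 + \chi^2(P\|Q)$.

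The main obstacle is this reassembly in \eqref{eq: plan-core}: the $\beta_2$-dependent improvement of $D_1 \le D_2$ has to be obtained in sharp form, since there is no slack to waste --- for the two-element family $P(a) = Q(b) = \tfrac12 - \tfrac{\eta}{4}$, $P(b) = Q(a) = \tfrac12 + \tfrac{\eta}{4}$ one has $D(P\|Q) = \frac{\log e}{2}\eta^2 + o(\eta^2)$ as $\eta \downarrow 0$, while the right-hand side of \eqref{eq: refined upper bound on KL divergence in terms of TV distance for probability measures on a finite set} also equals $\frac{\log e}{2}\eta^2 + o(\eta^2)$; here the plain bound $D(P\|Q) \le \log\bigl(1+\chi^2(P\|Q)\bigr)$ is already off by a factor of about $2$, so it is the combined use of the $\beta_2$-sharpened bound and \eqref{eq: plan-chisq} that is needed.
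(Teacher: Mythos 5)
Your overall architecture coincides with the paper's: bound $D(P\|Q)$ by $\log\bigl(1+\chi^2(P,Q)\bigr)$ minus a Pinsker-type correction, bound $\chi^2(P,Q)\le \frac{|P-Q|^2}{2Q_{\min}}$ separately, and combine by monotonicity of the logarithm. Your proof of the $\chi^2$ estimate is correct (a minor variant of \eqref{eq: upper bound on the chi-square divergence} in the paper). The difficulty is the first estimate, which you yourself label ``the heart of the matter'' and ``the main obstacle'': it is never proved. You reduce it to the claim $(1+\beta_2)\,D(P\|Q)\le D_2(P\|Q)$ followed by Pinsker applied to $D(P\|Q)$, but that claim is supported only by a sketch (``this improvement should be squeezed out of the hypothesis\dots''), and the sketch does not deliver what is needed. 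Mimicking the proof of Theorem~\ref{theorem: tightened version of Sergio's inequality} on $\mathcal{A}\setminus\mathcal{B}$, as in \eqref{eq: upper bound on second term of the relative entropy}, produces a correction that is \emph{linear} in the total variation distance, $-\tfrac{\beta_2\log e}{2}\,|P-Q|$; converting this to a quadratic term via $|P-Q|\le 2$ yields only $-\tfrac{\beta_2\log e}{4}\,|P-Q|^2$, half of what the theorem asserts, and your own tightness computation for the two-point family shows there is no slack to absorb such a loss. Likewise, Jensen with weights $P(a)/P(\mathcal{B})$ on $\mathcal{B}$ does not visibly reassemble into $\log\bigl(1+\chi^2(P,Q)\bigr)$. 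So the central inequality of the proof is an unestablished conjecture.

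For comparison, the paper closes this gap with a different key lemma: Dragomir's refinement of Jensen's inequality (Proposition~\ref{proposition:new inequality relating f-divergences}, specialized in \eqref{eq: relation between relative entropy, dual of relative entropy and chi-squared divergence}) gives
\[
D(P\|Q) + \beta_2\, D(Q\|P) \;\le\; \log\bigl(1+\chi^2(P,Q)\bigr),
\]
and Pinsker's inequality is then applied to the \emph{reverse} relative entropy $D(Q\|P)$ (see \eqref{eq: refined lower bound on the chi^2 divergence}), not to $D(P\|Q)$. Note that your target $(1+\beta_2)D(P\|Q)\le D_2(P\|Q)$ neither implies nor follows from this, since $D(P\|Q)$ and $D(Q\|P)$ are not comparable in general; it does appear to hold numerically and to third order in a perturbative expansion around $P=Q$, but you would need a complete proof of it (or, more simply, of the displayed inequality above) for your argument to stand.
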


\begin{remark}
The upper bound on the relative entropy in
Theorem~\ref{theorem: upper bound on KL divergence in terms of TV distance for probability measures on a finite set}
improves the bound in \eqref{eq: looser bound by Csiszar and Talata}.
The improvement in \eqref{eq: refined upper bound on KL divergence in terms of TV distance for probability measures on a finite set}
is demonstrated as follows: let $V \triangleq |P-Q|$, then the RHS
of \eqref{eq: refined upper bound on KL divergence in terms of TV distance for probability measures on a finite set} satisfies
\begin{align*}
\log\left(1+\frac{V^2}{2Q_{\min}}\right) - \frac{\beta_2 \log e}{2} \cdot V^2
\leq \log\left(1+\frac{V^2}{2Q_{\min}}\right)
\leq \frac{V^2 \, \log e}{2Q_{\min}}
\leq \frac{V^2 \, \log e}{Q_{\min}}.
\end{align*}
Hence, the upper bound on $D(P\|Q)$ in
Theorem~\ref{theorem: upper bound on KL divergence in terms of TV distance for probability measures on a finite set}
can be loosened to \eqref{eq: looser bound by Csiszar and Talata}.
\end{remark}

\vspace*{0.1cm}
\begin{proof}
Theorem~\ref{theorem: upper bound on KL divergence in terms of TV distance for probability measures on a finite set}
is proved by obtaining upper and lower bounds on the $\chi^2$-divergence from $P$ to~$Q$
\begin{align}
\chi^2(P,Q) \triangleq \sum_{a \in \mathcal{A}} \frac{(P(a)-Q(a))^2}{Q(a)}
= \sum_{a \in \mathcal{A}} \frac{P(a)^2}{Q(a)} - 1.
\label{eq: chi-squared divergence}
\end{align}
A lower bound follows by invoking Jensen's inequality:
\begin{align}
\chi^2(P,Q)
& = \sum_{a \in \mathcal{A}} \frac{P(a)^2}{Q(a)} - 1 \nonumber \\
& = \sum_{a \in \mathcal{A}} P(a) \, \exp\left( \log \frac{P(a)}{Q(a)} \right) - 1 \nonumber \\
& \geq \exp\left(\sum_{a \in \mathcal{A}} P(a) \, \log \frac{P(a)}{Q(a)} \right) - 1 \nonumber \\[0.1cm]
& = \exp\bigl( D(P \| Q) \bigr) - 1.
\label{eq: lower bound on the chi-square divergence in terms of the KL divergence}
\end{align}
A refined version of \eqref{eq: lower bound on the chi-square divergence in terms of the KL divergence}
is derived in the following. The starting point of its derivation relies on a refined version of Jensen's
inequality from \cite[Theorem~1]{Dragomir06}, which enables to get the inequality
\begin{align}
\min_{a \in \mathcal{A}} \frac{P(a)}{Q(a)} \cdot D(Q||P)
\leq \log \bigl( 1 + \chi^2(P,Q) \bigr) - D(P||Q)
\leq \max_{a \in \mathcal{A}} \frac{P(a)}{Q(a)} \cdot D(Q||P).
\label{eq: relation between relative entropy, dual of relative entropy and chi-squared divergence}
\end{align}
Inequality~\eqref{eq: relation between relative entropy, dual of relative entropy and chi-squared divergence} is
proved in the appendix.
From the LHS of \eqref{eq: relation between relative entropy, dual of relative entropy and chi-squared divergence}
and the definition of $\beta_2$ in \eqref{eq: Q_min and beta2}, we have
\begin{align}
\chi^2(P,Q) & \geq \exp \Bigl( D(P\|Q) + \beta_2 \, D(Q \|P) \Bigr) - 1 \nonumber \\[0.1cm]
& \geq \exp \left( D(P \|Q) + \frac{\beta_2 \, \log e}{2} \cdot |P-Q|^2 \right) - 1
\label{eq: refined lower bound on the chi^2 divergence}
\end{align}
where the last inequality relies on Pinsker's lower bound on $D(Q\|P)$.
Inequality~\eqref{eq: refined lower bound on the chi^2 divergence} refines
the lower bound in \eqref{eq: lower bound on the chi-square divergence in terms of the KL divergence}
since $\beta_2 \in [0,1]$, and it coincides with
\eqref{eq: lower bound on the chi-square divergence in terms of the KL divergence}
in the worst case where $\beta_2=0$.

An upper bound on $\chi^2(P, Q)$ is derived as follows:
\begin{align}
\chi^2(P,Q) & = \sum_{a \in \mathcal{A}} \frac{(P(a)-Q(a))^2}{Q(a)} \nonumber \\[0.1cm]
& \leq \frac{\sum_{a \in \mathcal{A}} \bigl(P(a)-Q(a)\bigr)^2}{ \min_{a \in \mathcal{A}} Q(a) } \nonumber \\[0.1cm]
& \leq \frac{\max_{a \in \mathcal{A}} |P(a)-Q(a)| \, \sum_{a \in \mathcal{A}}
\bigl|P(a)-Q(a)\bigr|}{ \min_{a \in \mathcal{A}} Q(a) } \nonumber \\[0.1cm]
& = \frac{|P-Q| \; \max_{a \in \mathcal{A}} |P(a)-Q(a)|}{Q_{\min}}
\label{eq: first intermediate step for the derivation of the upper bound on the chi-square divergence}
\end{align}
and, from \eqref{eq: TV distance},
\begin{align}
|P-Q| \geq 2 \max_{a \in \mathcal{A}} |P(a)-Q(a)|
\label{eq: second intermediate step for the derivation of the upper bound on the chi-square divergence}
\end{align}
since, for every $a \in \mathcal{A}$, the 1-element set $\{a\}$ is included in the $\sigma$-algebra $\mathcal{F}$.
Combining \eqref{eq: first intermediate step for the derivation of the upper bound on the chi-square divergence} and
\eqref{eq: second intermediate step for the derivation of the upper bound on the chi-square divergence} gives that
\begin{align}
\chi^2(P,Q) \leq \frac{|P-Q|^2}{2Q_{\min}}.
\label{eq: upper bound on the chi-square divergence}
\end{align}
Inequality~\eqref{eq: refined upper bound on KL divergence in terms of TV distance for probability measures on a finite set}
finally follows from the bounds on the $\chi^2$-divergence in
\eqref{eq: refined lower bound on the chi^2 divergence} and \eqref{eq: upper bound on the chi-square divergence}.
\end{proof}

\vspace*{0.1cm}
\begin{corollary}
Under the same setting as in Theorem~\ref{theorem: upper bound on KL divergence in terms of TV distance for probability measures on a finite set},
we have
\begin{equation}
D(P \| Q) \leq \log \left(1 + \frac{|P-Q|^2}{2 Q_{\min}} \right).
\label{eq: upper bound on KL divergence in terms of TV distance for probability measures on a finite set}
\end{equation}
\label{corollary: upper bound on KL divergence in terms of TV distance for probability measures on a finite set}
\end{corollary}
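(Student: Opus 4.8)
The plan is to obtain \eqref{eq: upper bound on KL divergence in terms of TV distance for probability measures on a finite set} as an immediate weakening of Theorem~\ref{theorem: upper bound on KL divergence in terms of TV distance for probability measures on a finite set}. By its definition in \eqref{eq: Q_min and beta2} we have $\beta_2 \in [0,1]$, so the term $-\tfrac{\beta_2 \log e}{2} \, |P-Q|^2$ on the right-hand side of \eqref{eq: refined upper bound on KL divergence in terms of TV distance for probability measures on a finite set} is non-positive; simply dropping it yields \eqref{eq: upper bound on KL divergence in terms of TV distance for probability measures on a finite set}.

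It is also worth recording that the corollary admits a short self-contained proof that does not invoke the refined Jensen inequality of \cite[Theorem~1]{Dragomir06}. Indeed, one can combine the elementary bound \eqref{eq: lower bound on the chi-square divergence in terms of the KL divergence}, namely $\chi^2(P,Q) \geq \exp\bigl(D(P\|Q)\bigr) - 1$ (which follows from the ordinary Jensen inequality applied to the convex map $x \mapsto \exp(x)$, with the averaging taken under $P$ and $x = \log\frac{P}{Q}$), with the upper bound \eqref{eq: upper bound on the chi-square divergence}, $\chi^2(P,Q) \leq \frac{|P-Q|^2}{2Q_{\min}}$, which is established in the proof of Theorem~\ref{theorem: upper bound on KL divergence in terms of TV distance for probability measures on a finite set} via \eqref{eq: first intermediate step for the derivation of the upper bound on the chi-square divergence} and \eqref{eq: second intermediate step for the derivation of the upper bound on the chi-square divergence}. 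Chaining the two bounds gives $\exp\bigl(D(P\|Q)\bigr) - 1 \leq \frac{|P-Q|^2}{2Q_{\min}}$, and taking logarithms on both sides completes the argument.

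There is essentially no obstacle here; the corollary is a one-line consequence of the theorem. The only points worth a sanity check are that the strict positivity of $Q$ guarantees $Q_{\min} > 0$, so that both the division and the logarithm in \eqref{eq: upper bound on KL divergence in terms of TV distance for probability measures on a finite set} are well defined, and that $\chi^2(P,Q)$ as defined in \eqref{eq: chi-squared divergence} is finite in this setting — both of which are already secured by the hypotheses of Theorem~\ref{theorem: upper bound on KL divergence in terms of TV distance for probability measures on a finite set}.
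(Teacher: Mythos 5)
Your proof is correct and matches the paper's own argument exactly: Corollary~\ref{corollary: upper bound on KL divergence in terms of TV distance for probability measures on a finite set} is obtained by dropping the non-positive term $-\tfrac{\beta_2 \log e}{2}\,|P-Q|^2$ from \eqref{eq: refined upper bound on KL divergence in terms of TV distance for probability measures on a finite set} using $\beta_2 \geq 0$. Your supplementary self-contained route, chaining \eqref{eq: lower bound on the chi-square divergence in terms of the KL divergence} with \eqref{eq: upper bound on the chi-square divergence}, is also valid and simply retraces the intermediate steps of the theorem's proof without the Dragomir refinement.
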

\begin{proof}
This inequality follows from
\eqref{eq: refined upper bound on KL divergence in terms of TV distance for probability measures on a finite set}
and since $\beta_2 \geq 0$.
\end{proof}

\subsection{Discussion}
In the following, we discuss
Theorem~\ref{theorem: upper bound on KL divergence in terms of TV distance for probability measures on a finite set}
and its proof, and link it to some related results.

\vspace*{0.1cm}
\begin{remark}
The combination of \eqref{eq: lower bound on the chi-square divergence in terms of the KL divergence}
with the second line of \eqref{eq: first intermediate step for the derivation of the upper bound on the chi-square divergence},
without further loosening the upper bound on the $\chi^2$-divergence as is done in the third line of
\eqref{eq: first intermediate step for the derivation of the upper bound on the chi-square divergence} and
inequality~\eqref{eq: second intermediate step for the derivation of the upper bound on the chi-square divergence},
gives the following tighter upper bound on the relative entropy in terms of the Euclidean norm $|P-Q|_2$:
\begin{equation}
D(P \| Q) \leq \log \left(1 + \frac{|P-Q|_2^2}{Q_{\min}} \right).
\label{eq: upper bound on KL divergence in terms of the Euclidean norm of P-Q for probability measures on a finite set}
\end{equation}
This improves the upper bound on the relative entropy in the proofs of Property~4 of \cite[Lemma~7]{TomamichelT_IT13}
and \cite[Appendix~D]{KostinaV15}:
$$ D(P \| Q) \leq \frac{|P-Q|_2^2 \, \log e}{Q_{\min}}.$$
Furthermore, avoiding the use of Jensen's inequality in
\eqref{eq: lower bound on the chi-square divergence in terms of the KL divergence}, gives the equality
(see \cite[Eq.~(6)]{ErvenH14})
\begin{align}
\chi^2(P,Q) = \exp\bigl(D_2(P\|Q)\bigr)-1
\label{eq: relation between the chi-square and quadratic Renyi divergences}
\end{align}
whose combination with the second line of
\eqref{eq: first intermediate step for the derivation of the upper bound on the chi-square divergence} gives
\begin{equation}
D_2(P \| Q) \leq \log \left(1 + \frac{|P-Q|_2^2}{Q_{\min}} \right).
\label{eq: upper bound on quadratic Renyi divergence in terms of the Euclidean norm of P-Q for probability measures on a finite set}
\end{equation}
Inequality~\eqref{eq: upper bound on quadratic Renyi divergence in terms of the Euclidean norm of P-Q for probability measures on a finite set}
improves the tightness of
inequality~\eqref{eq: upper bound on KL divergence in terms of the Euclidean norm of P-Q for probability measures on a finite set}.
Note that \eqref{eq: upper bound on quadratic Renyi divergence in terms of the Euclidean norm of P-Q for probability measures on a finite set}
is satisfied with equality when $Q$ is an equiprobable distribution over a finite set.
\end{remark}

\vspace*{0.1cm}
\begin{remark}
Inequality \eqref{eq: lower bound on the chi-square divergence in terms of the KL divergence} improves the
lower bound on the $\chi^2$-divergence in \cite[Lemma~6.3]{CsiszarT_IT06} which states that
$\chi^2(P,Q) \geq D(P \| Q)$; this improvement also follows from \cite[Eqs.~(6), (7)]{ErvenH14}.
\end{remark}

\vspace*{0.1cm}
\begin{remark}
The upper bound on the relative entropy in
\eqref{eq: refined upper bound on KL divergence in terms of TV distance for probability measures on a finite set}
involves the parameter $\beta_2 \in [0,1]$ as defined in \eqref{eq: Q_min and beta2}. A non-trivial lower bound
on $\beta_2$ can be used in conjunction with
\eqref{eq: refined upper bound on KL divergence in terms of TV distance for probability measures on a finite set}
for improving the upper bound in
Corollary~\ref{corollary: upper bound on KL divergence in terms of TV distance for probability measures on a finite set}.
We derive in the following a lower bound on $\beta_2$ for a given probability measure $Q$ and a given total variation
distance $|P-Q|$, which can be used in conjunction with
\eqref{eq: refined upper bound on KL divergence in terms of TV distance for probability measures on a finite set},
to get an upper bound on the relative entropy $D(P\|Q)$. We have
$$\beta_2 = \min_{a \in \mathcal{A}} \frac{P(a)}{Q(a)} \geq \frac{P_{\min}}{Q_{\max}}$$
where $$P_{\min} \triangleq \min_{a \in \mathcal{A}} P(a), \quad Q_{\max} \triangleq \max_{a \in \mathcal{A}} Q(a).$$
Note that, if $|P-Q| < Q_{\min}$ then $P_{\min} \geq Q_{\min} - |P-Q| > 0$. Let $(x)^+ \triangleq \max \bigl\{ x, 0 \bigr\}$, then
\begin{align}
\beta_2 \geq \frac{\bigl(Q_{\min} - |P-Q|\bigr)^{+}}{Q_{\max}}.
\label{eq: lower bound on beta_2}
\end{align}
\end{remark}

\vspace*{0.1cm}
\begin{remark}
In an attempt to extend the concept of proof of
Theorem~\ref{theorem: upper bound on KL divergence in terms of TV distance for probability measures on a finite set}
to general probability measures, we have
\begin{align}
\chi^2(P,Q) & = \int_{\mathcal{A}} \left( \frac{\text{d}P}{\text{d}Q} - 1 \right)^2 \, \text{d}Q \nonumber \\[0.1cm]
& = \expectation \Bigl[ \bigl( \exp\bigl(i_{P \|Q}(Y) \bigr) - 1 \bigr)^2 \Bigr]  \quad \quad (Y \sim Q) \nonumber \\[0.1cm]
& \leq \sup_{a \in \mathcal{A}} \, \bigl| \exp\bigl(i_{P \|Q}(a) \bigr) - 1 \bigr| \cdot
\expectation \Bigl[ \bigl| \exp\bigl(i_{P \|Q}(Y) \bigr) - 1 \bigr| \Bigr]  \nonumber \\[0.1cm]
& \stackrel{(\text{a})}{=} \sup_{a \in \mathcal{A}} \, \bigl| \exp\bigl(i_{P \|Q}(a) \bigr) - 1 \bigr| \cdot |P-Q| \nonumber \\[0.1cm]
& = \sup_{a \in \mathcal{A}} \, \Bigl| \frac{\text{d}P}{\text{d}Q} \, (a) - 1 \Bigr| \cdot |P-Q|
\label{eq: 1st step in upper bounding the chi-squared divergence}
\end{align}
where equality~(a) holds due to~\eqref{eq: total variation distance expressed in terms of relative information1}.
Let $\beta_1, \beta_2 \in [0,1]$ be defined as in Theorem~\ref{theorem: tightened version of Sergio's inequality}
(see \eqref{eq: beta1 and beta2}). Since we have $\beta_2 \leq \frac{\text{d}P}{\text{d}Q} \, (a) \leq \beta_1^{-1}$
for all $a \in \mathcal{A}$ then
\begin{align}
\sup_{a \in \mathcal{A}} \, \Bigl| \frac{\text{d}P}{\text{d}Q} \, (a) - 1 \Bigr| \leq \max\bigl\{ \beta_1^{-1}-1, 1-\beta_2 \bigr\}.
\label{eq: 2nd step in upper bounding the chi-squared divergence}
\end{align}
A combination of \eqref{eq: 1st step in upper bounding the chi-squared divergence} and
\eqref{eq: 2nd step in upper bounding the chi-squared divergence} leads to the following
upper bound on the $\chi^2$-divergence:
\begin{align}
\chi^2(P,Q) \leq \max\bigl\{ \beta_1^{-1}-1, 1-\beta_2 \bigr\} \cdot |P-Q|.
\label{eq: upper bound on the chi-squared divergence for general probability measures}
\end{align}
A combination of \eqref{eq: relation between the chi-square and quadratic Renyi divergences} (see \cite[Eq.~(6)]{ErvenH14})
and \eqref{eq: upper bound on the chi-squared divergence for general probability measures} gives
\begin{align}
D_2(P \| Q) \leq \log \Bigl( 1 + \max\bigl\{ \beta_1^{-1}-1, 1-\beta_2 \bigr\} \cdot |P-Q| \Bigr)
\label{eq: upper bound on the quadratic Renyi divergence for general probability measures}
\end{align}
and since the R\'{e}nyi divergence is monotonic non-decreasing in its order (see, e.g., \cite[Theorem~3]{ErvenH14})
and $D(P\|Q) = D_1(P\|Q)$, it follows that
\begin{align}
D(P \| Q) \leq \log \Bigl( 1 + \max\bigl\{ \beta_1^{-1}-1, 1-\beta_2 \bigr\} \cdot |P-Q| \Bigr).
\label{eq: loosened upper bound on the relative entropy for general probability measures}
\end{align}
A comparison of the upper bound on the relative entropy in
\eqref{eq: loosened upper bound on the relative entropy for general probability measures}
and the bound of Theorem~\ref{theorem: tightened version of Sergio's inequality} in
\eqref{eq: tightened version of Sergio's inequality} yields that the latter bound is superior.
Hence, the extension of the concept of proof of
Theorem~\ref{theorem: upper bound on KL divergence in terms of TV distance for probability measures on a finite set}
to general probability measures does not improve the bound in Theorem~\ref{theorem: tightened version of Sergio's inequality}.
\label{remark: extension of the proof to general probability measures}
\end{remark}

\vspace*{0.1cm}
\begin{remark}
The second inequality in \eqref{eq: refined lower bound on the chi^2 divergence} relies on
Pinsker's inequality as a lower bound on $D(Q\|P)$. This lower bound can be slightly improved
by invoking higher-order Pinsker's-type inequalities (see \cite[Section~5]{Gilardoni10} and references
therein). In \cite[Section~6]{Gilardoni10}, Gilardoni derived a lower bound on the relative
entropy which is tight for both large and small total variation distances. Hence, the
second inequality in \eqref{eq: refined lower bound on the chi^2 divergence} can instead rely
on the inequality (see \cite[Eq.~(2)]{Gilardoni10}):
\begin{align*}
D(Q\|P) \geq -\log \left(1 - \frac{|P-Q|}{2} \right) - \left(1 - \frac{|P-Q|}{2}\right) \,
\log \left(1+\frac{|P-Q|}{2}\right).
\end{align*}
Note that although the latter lower bound on the relative entropy is tight for both large and small
total variation distances, it is not uniformly tighter than Pinsker's inequality. For this reason and
for the simplicity of the bound, we rely on Pinsker's inequality in the second inequality of
\eqref{eq: refined lower bound on the chi^2 divergence}.
An exact parametrization of the minimum of the relative entropy in terms of
the total variation distance was introduced in \cite[Theorem~1]{FedotovHT_IT03}, expressed in
terms of hyperbolic functions; the bound, however, is not expressed in closed form in terms of
the total variation distance.
\end{remark}

\vspace*{0.1cm}
\begin{remark}
A related problem to
Theorem~\ref{theorem: upper bound on KL divergence in terms of TV distance for probability measures on a finite set}
has been recently studied in \cite{BerendHK_IT14}. Consider an arbitrary probability measure $Q$, and
an arbitrary $\varepsilon \in [0,2]$. The problem studied in \cite{BerendHK_IT14}
is the characterization of $D^*(\varepsilon, Q)$, defined to be the infimum of $D(P||Q)$
over all probability measures $P$ that are at least $\varepsilon$-far away from $Q$ in total variation, i.e.,
$$D^*(\varepsilon, Q) = \inf_{P \colon |P-Q| \geq \varepsilon} D(P \| Q), \quad \varepsilon \in [0,2].$$
Note that $D(P\|Q) < \infty$ yields that $\text{Supp}(P) \subseteq \text{Supp}(Q)$.
From Sanov's theorem (see \cite[Theorem~11.4.1]{Cover_Thomas}), $D^*(\varepsilon, Q)$ is equal to the
asymptotic exponential decay of the probability that the total variation distance between the empirical
distribution of a sequence of i.i.d. random variables and the true distribution $(Q)$ is more than a
specified value $\varepsilon$. Upper and lower bounds on $D^*(\varepsilon, Q)$
have been introduced in \cite[Theorem~1]{BerendHK_IT14}, in terms of the balance coefficient
$\beta \geq \frac{1}{2}$ that is defined as
$$\beta \triangleq \inf \left\{ x \in \bigl\{Q(A) \colon A \in \mathcal{F} \bigr\} \colon x \geq \frac{1}{2} \right\}.$$
It has been demonstrated in \cite[Theorem~1]{BerendHK_IT14} that
\begin{align}
D^*(\varepsilon, Q) = C \varepsilon^2 + O(\varepsilon^3)
\label{eq: BHK14}
\end{align}
where
$$\frac{1}{4(2\beta-1)} \, \log \left(\frac{\beta}{1-\beta}\right) \leq C \leq \frac{\log e}{8\beta(1-\beta)}.$$
If the support of $Q$ is a finite set $\mathcal{A}$,
Theorem~\ref{theorem: upper bound on KL divergence in terms of TV distance for probability measures on a finite set}
and \eqref{eq: lower bound on beta_2} yield that
\begin{align*}
D^*(\varepsilon, Q) \leq \log \left(1 + \frac{\varepsilon^2}{2 Q_{\min}}\right)
- \frac{\log e}{2} \cdot \frac{1}{Q_{\max}} \cdot \bigl( Q_{\min} - \varepsilon \bigr)^+.
\end{align*}
Hence, it follows that $D^*(\varepsilon, Q) \leq C_1 \varepsilon^2 + O(\varepsilon^3)$ where
\begin{equation*}
C_1 = \frac{\log e}{2} \left(\frac{1}{Q_{\min}} - \frac{Q_{\min}}{Q_{\max}}\right).
\end{equation*}
Similarly to \eqref{eq: BHK14}, the same quadratic scaling of $D^*(\varepsilon, Q)$ holds for small values of $\varepsilon$,
but with different coefficients.
\end{remark}

\subsection{Example: Total Variation Distance From the Equiprobable Distribution}
Let $\mathcal{A}$ be a finite set, and let $U$ be the equiprobable probability measure on $\mathcal{A}$
(i.e., $U(a) = \frac{1}{|\mathcal{A}|}$ for every $a \in \mathcal{A}$).
The relative entropy of an arbitrary distribution $P$ on $\mathcal{A}$ with respect to the equiprobable
distribution satisfies
\begin{align}
D(P \| U) = \log | \mathcal{A} | - H(P).
\label{eq: relative entropy w.r.t. the equiprobable distribution}
\end{align}
From Pinsker's inequality \eqref{eq: Pinsker's inequality}, the following upper bound on the total variation
distance holds:
\begin{align}
|P-U| \leq \sqrt{ \frac{2}{\log e} \cdot \bigl( \log | \mathcal{A} | - H(P) \bigr) }.
\label{eq: 1st upper bound on TV distance from P to the uniform distribution}
\end{align}
From \cite[Theorem~2.51]{Verdu_book}, for all probability measures $P$ and $Q$,
$$ |P-Q| \leq 2 \sqrt{1 - \exp\bigl( -D(P\|Q) \bigr)} $$
which gives the second upper bound
\begin{align}
|P-U| \leq 2 \sqrt{1 - \frac{1}{|\mathcal{A}|} \cdot \exp\bigl( H(P) \bigr)}.
\label{eq: 2nd upper bound on TV distance from P to the uniform distribution}
\end{align}
From Theorem~\ref{theorem: upper bound on KL divergence in terms of TV distance for probability measures on a finite set}
and \eqref{eq: lower bound on beta_2}, we have
\begin{align*}
D(P \| U) \leq \log \left( 1 + \frac{|\mathcal{A}|}{2} \cdot |P-U|^2 \right)
- \left( \frac{|\mathcal{A}| \, \log e}{2} \right) \cdot \left(\frac{1}{|\mathcal{A}|} - |P-U|\right)^+ \cdot |P-U|^2.
\end{align*}
A loosening of the latter bound by removing its second non-negative term on the RHS of this inequality,
in conjunction with \eqref{eq: relative entropy w.r.t. the equiprobable distribution},
leads to the following closed-form expression for the lower bound on the total variation distance:
\begin{align}
|P-U| \geq \sqrt{ 2 \left( \exp\bigl( -H(P) \bigr) - \frac{1}{|\mathcal{A}|} \right) }.
\label{eq: lower bound on TV distance from P to the uniform distribution}
\end{align}
Let $H(P) = \beta \, \log |\mathcal{A}|$, so $\beta \in [0,1]$.
From \eqref{eq: 1st upper bound on TV distance from P to the uniform distribution},
\eqref{eq: 2nd upper bound on TV distance from P to the uniform distribution}
and \eqref{eq: lower bound on TV distance from P to the uniform distribution},
it follows that
\begin{equation}
\sqrt{2 \left[ \left(\frac{1}{|\mathcal{A}|} \right)^{\beta} - \frac{1}{|\mathcal{A}|} \right]}
\leq |P-U| \leq \min \left\{ \sqrt{2 (1-\beta) \ln |\mathcal{A}|}, \;
2 \sqrt{1-|\mathcal{A}|^{\beta-1}} \right\}.
\label{eq: upper and lower bounds on TV distance from P to the uniform distribution}
\end{equation}
As expected, if $\beta=1$, both upper and lower bounds are equal to zero (since $D(P \| U) = 0$).
The lower bound on the LHS of \eqref{eq: upper and lower bounds on TV distance from P to the uniform distribution}
improves the lower bound on the total variation distance which follows from
\eqref{eq: looser bound by Csiszar and Talata}:
\begin{align}
|P-U| \geq \sqrt{\frac{(1-\beta) \ln |\mathcal{A}|}{|\mathcal{A}|}}
\label{eq: looser lower bound on TV distance from the equiprobable distribution}
\end{align}
For example, for a set of size $|\mathcal{A}|=1024$ and $\beta = 0.5$, the improvement
in the new lower bound on the total variation distance is from 0.0582 to 0.2461.

Note that if $\beta \rightarrow 0$ (i.e., $P$ is far in relative entropy from the
equiprobable distribution), and the set $\mathcal{A}$ stays fixed, the ratio between
the upper and lower bounds in
\eqref{eq: upper and lower bounds on TV distance from P to the uniform distribution}
tends to $\sqrt{2}$. On the other hand, in this case, the ratio between the upper
and the looser lower bound in
\eqref{eq: looser lower bound on TV distance from the equiprobable distribution}
tends to $$2\sqrt{\frac{|\mathcal{A}|-1}{\ln |\mathcal{A}|}},$$
which can be made arbitrarily large for a sufficiently large set~$\mathcal{A}$.

\section{Extension of
Theorem~\ref{theorem: upper bound on KL divergence in terms of TV distance for probability measures on a finite set}
to R\'{e}nyi Divergences}
\label{section: Extension to Renyi divergences}
The present section extends
Theorem~\ref{theorem: upper bound on KL divergence in terms of TV distance for probability measures on a finite set}
to R\'{e}nyi divergences of an arbitrary order $\alpha \in [0, \infty]$ (i.e., it relies on
Theorem~\ref{theorem: upper bound on KL divergence in terms of TV distance for probability measures on a finite set}
to provide a generalization of the special case where $\alpha = 1$), and the use of this generalized inequality
is exemplified.

\subsection{Main Result and Proof}
The following theorem provides a kind of a generalized reverse Pinsker inequality
where the R\'{e}nyi divergence of an arbitrary order $\alpha \in [0,\infty]$
is upper bounded in terms of the total variation distance for probability
measures defined on a common finite set.

\vspace*{0.1cm}
\begin{theorem}
Let $P$ and $Q$ be probability measures on a common finite set $\mathcal{A}$, and
assume that $P, Q$ are strictly positive.
Let $\varepsilon \triangleq |P-Q|$ (recall that $\varepsilon \in [0,2]$),
$\varepsilon' \triangleq \min\{1, \varepsilon\}$, and
$$P_{\min} \triangleq \min_{a \in \mathcal{A}} P(a), \quad Q_{\min} \triangleq \min_{a \in \mathcal{A}} Q(a).$$
Then, the R\'{e}nyi divergence of order $\alpha \in [0,\infty]$ satisfies
\begin{align}
D_{\alpha}(P \| Q) \leq \left\{
\begin{array}{ll}
\log \left(1 + \frac{\varepsilon}{2Q_{\min}} \right), & \mbox{if $\alpha \in (2, \infty]$} \\[0.3cm]
\log \left(1 + \frac{\varepsilon \, \varepsilon'}{2Q_{\min}} \right), & \mbox{if $\alpha \in [1,2]$} \\[0.3cm]
\min \left\{ f_1, f_2 \right\},
& \mbox{if $\alpha \in \bigl(\frac{1}{2}, 1 \bigr)$} \\[0.4cm]
\min \left\{ -2 \log\left(1-\frac{\varepsilon}{2} \right), f_1, f_2 \right\},
& \mbox{if $\alpha \in \bigl[0, \frac{1}{2}\bigr]$}
\end{array}
\right.
\label{eq: upper bound on Renyi divergence in terms of TV distance for probability measures on a finite set}
\end{align}
where, for $\alpha \in [0,1)$,
\begin{align}
& f_1 \equiv f_1(\alpha, P_{\min}, Q_{\min}, \varepsilon) \triangleq
\left(\frac{\alpha}{1-\alpha}\right)
\left[ \log \left(1 + \frac{\varepsilon^2}{2P_{\min}} \right) - \left(\frac{Q_{\min} \, \log e}{2}\right)
\varepsilon^2 \right], \label{eq: f1} \\[0.2cm]
& f_2 \equiv f_2(P_{\min}, Q_{\min}, \varepsilon, \varepsilon') \triangleq \log \left(1 + \frac{\varepsilon \, \varepsilon'}{2Q_{\min}} \right)
- \left(\frac{P_{\min} \, \log e}{2}\right) \varepsilon^2. \label{eq: f2}
\end{align}
\label{theorem: upper bound on Renyi divergence in terms of TV distance for probability measures on a finite set}
\end{theorem}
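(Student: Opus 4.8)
The plan is to handle the four ranges of $\alpha$ in \eqref{eq: upper bound on Renyi divergence in terms of TV distance for probability measures on a finite set} separately, in each range reducing $D_\alpha(P\|Q)$ to an ``anchor'' order --- namely $\infty$, $2$, $1$, or $\tfrac12$ --- by means of two standard structural properties of the R\'enyi divergence: its monotonic non-decreasing dependence on the order (so that $D_\alpha(P\|Q) \le D_\gamma(P\|Q)$ whenever $\alpha \le \gamma$; see \cite[Theorem~3]{ErvenH14}), and the skew-symmetry identity $(1-\alpha)\, D_\alpha(P\|Q) = \alpha\, D_{1-\alpha}(Q\|P)$ for $\alpha \in (0,1)$, which is immediate from \eqref{eq: Renyi divergence}. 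The anchor divergences will then be bounded in terms of $\varepsilon = |P-Q|$ by elementary estimates together with Theorem~\ref{theorem: upper bound on KL divergence in terms of TV distance for probability measures on a finite set}.

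A preliminary ingredient, used in the cases $\alpha \ge 1$, is a sharpening of the bound on the $\chi^2$-divergence from the proof of Theorem~\ref{theorem: upper bound on KL divergence in terms of TV distance for probability measures on a finite set}. Writing $\chi^2(P,Q) = \sum_{a \in \mathcal{A}} \frac{P(a)\,(P(a)-Q(a))}{Q(a)}$ and discarding the (non-positive) terms for which $P(a) < Q(a)$, the remaining sum is at most $\frac{1}{Q_{\min}} \sum_{a \colon P(a) \ge Q(a)} (P(a)-Q(a)) = \frac{\varepsilon}{2Q_{\min}}$, since $P(a) \le 1$ and $Q(a) \ge Q_{\min}$ on that set; combined with $\chi^2(P,Q) \le \frac{\varepsilon^2}{2Q_{\min}}$ from \eqref{eq: upper bound on the chi-square divergence}, this gives $\chi^2(P,Q) \le \frac{\varepsilon\,\varepsilon'}{2Q_{\min}}$, and by symmetry $\chi^2(Q,P) \le \frac{\varepsilon\,\varepsilon'}{2P_{\min}}$. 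Now, for $\alpha \in (2,\infty]$ monotonicity gives $D_\alpha(P\|Q) \le D_\infty(P\|Q) = \log \max_{a} \frac{P(a)}{Q(a)}$, and since $\frac{P(a)}{Q(a)} = 1 + \frac{P(a)-Q(a)}{Q(a)} \le 1 + \frac{1}{Q_{\min}} \max_{b}(P(b)-Q(b))^+ \le 1 + \frac{\varepsilon}{2Q_{\min}}$ (using $\max_b (P(b)-Q(b))^+ \le \tfrac12 |P-Q|$), the first case of \eqref{eq: upper bound on Renyi divergence in terms of TV distance for probability measures on a finite set} follows; for $\alpha \in [1,2]$, monotonicity and the identity $D_2(P\|Q) = \log\bigl(1+\chi^2(P,Q)\bigr)$ (see \eqref{eq: relation between the chi-square and quadratic Renyi divergences}) give $D_\alpha(P\|Q) \le \log\bigl(1 + \tfrac{\varepsilon\varepsilon'}{2Q_{\min}}\bigr)$, which is the second case.

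For $\alpha \in [0,1)$ the two functions $f_1$ and $f_2$ both come from Theorem~\ref{theorem: upper bound on KL divergence in terms of TV distance for probability measures on a finite set}. Re-running its proof with the upper bound on $\chi^2(P,Q)$ replaced by the sharpened $\frac{\varepsilon\varepsilon'}{2Q_{\min}}$, and with $\beta_2 \ge \frac{P_{\min}}{Q_{\max}} \ge P_{\min}$, yields $D(P\|Q) \le f_2$; applying the theorem to the pair $(Q,P)$ (with its $\beta_2$-parameter bounded by $\frac{Q_{\min}}{P_{\max}} \ge Q_{\min}$ and using the unrefined $\chi^2(Q,P) \le \frac{\varepsilon^2}{2P_{\min}}$) yields $D(Q\|P) \le \log\bigl(1+\tfrac{\varepsilon^2}{2P_{\min}}\bigr) - \tfrac{Q_{\min}\log e}{2}\,\varepsilon^2 = \tfrac{1-\alpha}{\alpha}\, f_1$. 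Hence, by monotonicity, $D_\alpha(P\|Q) \le D(P\|Q) \le f_2$, while by skew-symmetry together with $D_{1-\alpha}(Q\|P) \le D_1(Q\|P) = D(Q\|P)$ (since $1-\alpha \le 1$) we get $D_\alpha(P\|Q) = \tfrac{\alpha}{1-\alpha}\, D_{1-\alpha}(Q\|P) \le \tfrac{\alpha}{1-\alpha}\, D(Q\|P) \le f_1$; this proves the third case. Finally, for $\alpha \in [0,\tfrac12]$ monotonicity gives $D_\alpha(P\|Q) \le D_{1/2}(P\|Q) = -2\log \sum_a \sqrt{P(a)Q(a)} = -2\log\bigl(1 - \tfrac12 \sum_a (\sqrt{P(a)} - \sqrt{Q(a)})^2\bigr)$, and since $(\sqrt{P(a)}+\sqrt{Q(a)})^2 \ge P(a)+Q(a) \ge |P(a)-Q(a)|$ implies $(\sqrt{P(a)}-\sqrt{Q(a)})^2 = \frac{(P(a)-Q(a))^2}{(\sqrt{P(a)}+\sqrt{Q(a)})^2} \le |P(a)-Q(a)|$, the squared Hellinger distance is at most $\varepsilon$, so $D_{1/2}(P\|Q) \le -2\log\bigl(1 - \tfrac{\varepsilon}{2}\bigr)$, which is the extra term in the fourth case. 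The extreme orders are consistent: $D_0(P\|Q) = 0$ because $P$ is strictly positive, matching $f_1 = 0$ at $\alpha = 0$, and $\alpha = \infty$ was treated above.

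The main obstacle is the sharpened estimate $\chi^2(P,Q) \le \frac{\varepsilon}{2Q_{\min}}$: it is precisely what upgrades $\varepsilon^2$ to $\varepsilon\,\varepsilon'$ inside the logarithms for $\alpha \ge 1$, and it does not follow from the cruder chain used in the proof of Theorem~\ref{theorem: upper bound on KL divergence in terms of TV distance for probability measures on a finite set}; the identity $\chi^2(P,Q) = \sum_a \frac{P(a)(P(a)-Q(a))}{Q(a)}$ and the sign argument above are the key. Beyond that, the remaining work is bookkeeping across the four ranges, checking that monotonicity and skew-symmetry are applied in the right direction, that replacing $\beta_2$ by $P_{\min}$ (and by $Q_{\min}$ in the $(Q,P)$ application) only enlarges the bounds, and that the boundary orders $\alpha \in \{0,\tfrac12,1,2,\infty\}$ where adjacent pieces meet are handled consistently.
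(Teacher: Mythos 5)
Your proof is correct, and its skeleton---anchoring each range of $\alpha$ at one of the orders $\infty$, $2$, $1$, $\tfrac{1}{2}$ via monotonicity of $D_\alpha$ in the order and the skew-symmetry identity, then invoking Theorem~\ref{theorem: upper bound on KL divergence in terms of TV distance for probability measures on a finite set} and $D_2(P\|Q)=\log\bigl(1+\chi^2(P,Q)\bigr)$---is the same as the paper's. The one genuine departure is your direct estimate $\chi^2(P,Q)\le \frac{\varepsilon}{2Q_{\min}}$, obtained from the identity $\chi^2(P,Q)=\sum_a P(a)\bigl(P(a)-Q(a)\bigr)/Q(a)$ by discarding the non-positive terms and using $P(a)\le 1$; the paper instead produces the $\varepsilon\varepsilon'$ in the second line by intersecting the order-$\infty$ bound \eqref{eq: 1st bound for Renyi divergence} with the order-$2$ bound \eqref{eq: 2nd bound for Renyi divergence}, which yields the same expression there since $\log$ is increasing and $\varepsilon\varepsilon'=\min\{\varepsilon,\varepsilon^2\}$. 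Where your lemma genuinely earns its keep is in justifying $f_2$: the paper obtains the cases $\alpha<1$ by taking the minimum of the displayed bounds, but since $\min\{B,\,A-c\}\ge\min\{A,B\}-c$ (with $A=\log(1+\frac{\varepsilon^2}{2Q_{\min}})$, $B=\log(1+\frac{\varepsilon}{2Q_{\min}})$, $c=\frac{P_{\min}\log e}{2}\varepsilon^2$), that combination only gives $f_2$ with $\varepsilon^2$ rather than $\varepsilon\varepsilon'$ inside the logarithm when $\varepsilon>1$; re-running the proof of Theorem~\ref{theorem: upper bound on KL divergence in terms of TV distance for probability measures on a finite set} with your sharpened $\chi^2$ bound, as you do, is what actually delivers $f_2$ exactly as stated. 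Your self-contained derivation of $Z(P,Q)\ge 1-\frac{\varepsilon}{2}$ through the squared Hellinger distance replaces the paper's citation of \cite[Example~6.2]{GSS_IT14} and is equally valid, and the remaining bookkeeping (directions of monotonicity and skew-symmetry, the replacements $\beta_2\ge P_{\min}$ and $\min_a Q(a)/P(a)\ge Q_{\min}$, and the endpoint $\alpha=0$) all checks out.
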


\begin{proof}
The R\'{e}nyi divergence of order $\infty$ satisfies (see, e.g., \cite[Theorem~6]{ErvenH14})
$$D_{\infty}(P||Q) = \log \left(\text{ess sup} \, \frac{P}{Q}\right).$$
Since, by assumption, the probability measures $P$ and $Q$ are defined
on a common finite set~$\mathcal{A}$
\begin{align}
D_{\infty}(P||Q) & = \log \left(\max_{a \in \mathcal{A}} \frac{P(a)}{Q(a)}\right) \nonumber \\[0.2cm]
& = \log \left(1 + \max_{a \in \mathcal{A}} \frac{P(a)-Q(a)}{Q(a)}\right) \nonumber \\[0.2cm]
& \leq \log \left(1 + \frac{\max_{a \in \mathcal{A}} |P(a)-Q(a)|}{\min_{a \in \mathcal{A}} Q(a)} \right) \nonumber \\[0.2cm]
& \leq \log \left(1 + \frac{|P-Q|}{2Q_{\min}} \right)
\label{eq: upper bound on Renyi divergence of order infinity}
\end{align}
where the last inequality follows from
\eqref{eq: second intermediate step for the derivation of the upper bound on the chi-square divergence}.
Since the R\'{e}nyi divergence of order $\alpha \in [0, \infty]$
is monotonic non-decreasing in~$\alpha$ (see, e.g., \cite[Theorem~3]{ErvenH14}),
it follows from \eqref{eq: upper bound on Renyi divergence of order infinity} that
\begin{align}
D_{\alpha}(P \| Q) \leq D_{\infty}(P \| Q) \leq \log \left(1 + \frac{\varepsilon}{2Q_{\min}} \right),
\quad \forall \, \alpha \in [0, \infty]
\label{eq: 1st bound for Renyi divergence}
\end{align}
which proves the first line in
\eqref{eq: upper bound on Renyi divergence in terms of TV distance for probability measures on a finite set}
when the validity of the bound is restricted to $\alpha \in (2, \infty]$.

For proving the second line in
\eqref{eq: upper bound on Renyi divergence in terms of TV distance for probability measures on a finite set},
it is shown that the bound in
\eqref{eq: upper bound on KL divergence in terms of TV distance for probability measures on a finite set}
can be sharpened by replacing $D(P \| Q)$ on the LHS of
\eqref{eq: upper bound on KL divergence in terms of TV distance for probability measures on a finite set}
with the quadratic R\'{e}nyi divergence $D_2(P \| Q)$ (note that $D_2(P \| Q) \geq D(P \| Q)$), leading to
\begin{equation}
D_2(P \| Q) \leq \log \left(1 + \frac{|P-Q|^2}{2 Q_{\min}} \right).
\label{eq: upper bound on quadratic Renyi divergence in terms of TV distance for probability measures on a finite set}
\end{equation}
The strengthened inequality in
\eqref{eq: upper bound on quadratic Renyi divergence in terms of TV distance for probability measures on a finite set},
in comparison to \eqref{eq: upper bound on KL divergence in terms of TV distance for probability measures on a finite set},
follows by replacing inequality \eqref{eq: lower bound on the chi-square divergence in terms of the KL divergence}
with the equality in \eqref{eq: relation between the chi-square and quadratic Renyi divergences}. Combining
\eqref{eq: upper bound on the chi-square divergence} and
\eqref{eq: relation between the chi-square and quadratic Renyi divergences}
gives inequality
\eqref{eq: upper bound on quadratic Renyi divergence in terms of TV distance for probability measures on a finite set}, and
\begin{equation}
D_{\alpha}(P\|Q) \leq D_2(P \| Q) \leq \log \left(1 + \frac{\varepsilon^2}{2 Q_{\min}} \right), \quad
\forall \, \alpha \in [0,2].
\label{eq: 2nd bound for Renyi divergence}
\end{equation}
The combination of \eqref{eq: 1st bound for Renyi divergence}
with \eqref{eq: 2nd bound for Renyi divergence} gives the second line in
\eqref{eq: upper bound on Renyi divergence in terms of TV distance for probability measures on a finite set}
(note that $\varepsilon \varepsilon' = \min \{\varepsilon, \varepsilon^2\}$) while the validity of the bound
is restricted to $\alpha \in [1,2]$.

For $\alpha \in (0,1)$, $D_{\alpha}(P\|Q)$ satisfies the
skew-symmetry property $D_{\alpha}(P \| Q) = \frac{\alpha}{1-\alpha} \cdot D_{1-\alpha}(Q \| P)$
(see, e.g., \cite[Proposition~2]{ErvenH14}).
Consequently, we have
\begin{align}
D_{\alpha}(P \| Q) & = \left(\frac{\alpha}{1-\alpha}\right) D_{1-\alpha}(Q \| P) \nonumber \\[0.1cm]
& \leq \left(\frac{\alpha}{1-\alpha}\right) D(Q \| P) \nonumber \\[0.1cm]
& \leq \left(\frac{\alpha}{1-\alpha}\right)
\left[ \log \left(1 + \frac{\varepsilon^2}{2P_{\min}} \right) - \left(\frac{Q_{\min} \, \log e}{2}\right) \varepsilon^2 \right],
\quad \forall \, \alpha \in (0,1)
\label{eq: 3rd bound for Renyi divergence}
\end{align}
where the first inequality holds since the R\'{e}nyi divergence is monotonic non-decreasing in
its order, and the second inequality follows from
Theorem~\ref{theorem: upper bound on KL divergence in terms of TV distance for probability measures on a finite set}
which implies that
\begin{align*}
D(Q\|P) & \leq \log \left(1 + \frac{\varepsilon^2}{2P_{\min}} \right)
- \frac{\log e}{2} \cdot \min_{a \in \mathcal{A}} \frac{Q(a)}{P(a)} \cdot \varepsilon^2 \\
& \leq \log \left(1 + \frac{\varepsilon^2}{2P_{\min}} \right)
- \left(\frac{Q_{\min} \, \log e}{2} \right) \varepsilon^2.
\end{align*}
The third line in
\eqref{eq: upper bound on Renyi divergence in terms of TV distance for probability measures on a finite set}
follows from \eqref{eq: 1st bound for Renyi divergence}, \eqref{eq: 2nd bound for Renyi divergence}
and \eqref{eq: 3rd bound for Renyi divergence} while restricting the validity of the bound
to $\alpha \in \bigl(\frac{1}{2}, 1\bigr)$.

For proving the fourth line in
\eqref{eq: upper bound on Renyi divergence in terms of TV distance for probability measures on a finite set},
note that from \eqref{eq: Renyi divergence}
$D_{1/2}(P \| Q) = -2 \log Z(P,Q)$
where $Z(P,Q) \triangleq \sum_{a \in \mathcal{A}} \sqrt{ P(a) Q(a) }$ is the Bhattacharyya coefficient
between $P$ and $Q$ \cite{Kailath67}. The Bhattacharyya distance is defined as minus the logarithm
of the Bhattacharyya coefficient, which is non-negative in general and it is zero if and only if $P=Q$
(since $0 \leq Z(P,Q) \leq 1$, and $Z(P,Q)=1$ if and only if $P=Q$). Hence, the R\'{e}nyi divergence of
order $\frac{1}{2}$ is twice the Bhattacharyya distance. Based on the inequality
$Z(P,Q) \geq 1 - \frac{|P-Q|}{2}$, which follows from \cite[Example~6.2]{GSS_IT14} (see also
\cite[Proposition~1]{Sason_ITW15_Jerusalem}), we have
\begin{equation}
D_{\alpha}(P \| Q) \leq D_{1/2}(P \| Q) \leq -2 \log\left(1-\frac{\varepsilon}{2} \right), \quad \forall \,
\alpha \in \Bigl[0, \frac{1}{2} \Bigr]
\label{eq: 4th bound for Renyi divergence}
\end{equation}
where $\varepsilon \triangleq |P-Q| \in [0,2]$.
Finally, the last case in
\eqref{eq: upper bound on Renyi divergence in terms of TV distance for probability measures on a finite set}
follows from \eqref{eq: 1st bound for Renyi divergence}, \eqref{eq: 2nd bound for Renyi divergence},
\eqref{eq: 3rd bound for Renyi divergence} and~\eqref{eq: 4th bound for Renyi divergence}.
\end{proof}

\subsection{Example: R\'{e}nyi Divergence for Multinomial Distributions}

Let $X_1, X_2, \ldots$ be independent Bernoulli random variables with $X_i \sim \text{Bernoulli}(p_i)$,
and let $Y_1, Y_2, \ldots$ be independent Bernoulli random variables with $Y_i \sim \text{Bernoulli}(q_i)$
(assume w.l.o.g. that $q_i \leq \frac{1}{2}$).
Let $U_n$ and $V_n$ be the partial sums $U_n = \sum_{i=1}^n X_i$ and $V_n = \sum_{i=1}^n Y_i$, and let
$P_{U_n}, P_{V_n}$ denote their multinomial distributions. For all $\alpha \in [0,2]$ and $n \in \naturals$,
we have
\begin{align}
& D_{\alpha}(P_{U_n} \| P_{V_n}) \nonumber \\
& \stackrel{\text{(a)}}{\leq} D_{\alpha}(P_{X_1, \ldots, X_n} \| P_{Y_1, \ldots, Y_n}) \nonumber \\
& \stackrel{\text{(b)}}{=} \sum_{i=1}^n D_{\alpha}(P_{X_i} \| P_{Y_i}) \nonumber \\
& \stackrel{\text{(c)}}{\leq} \log \left(1 + \frac{|P_{X_i} - P_{Y_i}|^2}{2 \, \bigl(P_{Y_i}\bigr)_{\min}} \right) \nonumber \\
& \stackrel{\text{(d)}}{=} \sum_{i=1}^n \log \left(1 + 2q_i \left(\frac{p_i}{q_i}-1\right)^2 \right)
\label{eq: example - bound on Renyi divergence with alpha below 2}
\end{align}
where inequality~(a) follows from the data processing inequality for the R\'{e}nyi divergence (see
\cite[Theorem~9]{ErvenH14}), equality~(b) follows from the additivity property of the R\'{e}nyi divergence
under the independence assumption for $\{X_i\}$ and for $\{Y_i\}$ (see \cite[Theorem~28]{ErvenH14}),
inequality~(c) follows from
Theorem~\ref{theorem: upper bound on Renyi divergence in terms of TV distance for probability measures on a finite set},
and equality~(d) holds since $|P_{X_i} - P_{Y_i}| = 2 |p_i-q_i|$ for Bernoulli random variables, and
$(P_{Y_i})_{\min} = \min\{q_i, 1-q_i\} = q_i$ $(q_i \leq \frac{1}{2})$. Similarly, for all $\alpha > 2$
and $n \in \naturals$,
\begin{equation}
D_{\alpha}(P_{U_n} \| P_{V_n}) \leq \sum_{i=1}^n \log \left(1 + 2 \, \left|\frac{p_i}{q_i}-1\right| \right).
\label{eq: example - bound on Renyi divergence with alpha above 2}
\end{equation}
The only difference in the derivation of \eqref{eq: example - bound on Renyi divergence with alpha above 2}
is in inequality~(c) of \eqref{eq: example - bound on Renyi divergence with alpha below 2}
where the bound in the first line of
\eqref{eq: upper bound on Renyi divergence in terms of TV distance for probability measures on a finite set}
is used this time.

Let $\{\varepsilon_n\}_{n=1}^{\infty}$ be a non-negative sequence such that
$$(1-\varepsilon_n) q_n \leq p_n \leq (1+\varepsilon_n) q_n, \quad \forall \, n \in \naturals$$
and $$\sum_{n=1}^{\infty} \varepsilon_n^2 < \infty.$$
Then, from \eqref{eq: example - bound on Renyi divergence with alpha below 2}, it follows
that $D_{\alpha}(P_{U_n} \| P_{V_n}) \leq K_1$ for all $\alpha \in [0,2]$ and $n \in \naturals$ where
$$K_1 \triangleq \sum_{n=1}^{\infty} \log \left( 1 + \varepsilon_n^2 \right) < \infty.$$

Furthermore, if $\sum_{n=1}^{\infty} \varepsilon_n < \infty$, it follows from
\eqref{eq: example - bound on Renyi divergence with alpha above 2} that
$D_{\alpha}(P_{U_n} \| P_{V_n}) \leq K_2$ for all $\alpha > 2$ and $n \in \naturals$ where
$$K_2 \triangleq \sum_{n=1}^{\infty} \log \left( 1 + 2 \varepsilon_n \right) < \infty.$$

Note that although $D_{\alpha}(P_{X_i} \| P_{Y_i})$ in equality~(b)
of \eqref{eq: example - bound on Renyi divergence with alpha below 2} is equal to the
binary R\'{e}nyi divergence
\begin{align*}
d_{\alpha}(p_i \| q_i) \triangleq
\left\{ \begin{array}{ll}
\left(\frac{1}{\alpha-1} \right)
\log \Bigl(p_i^\alpha q_i^{1-\alpha}+(1-p_i)^\alpha (1-q_i)^{1-\alpha} \Bigr),
& \mbox{if $\alpha \in (0,1) \cup (1, \infty)$} \\[0.2cm]
p_i \, \log\left(\frac{p_i}{q_i}\right) + (1-p_i) \, \log\left(\frac{1-p_i}{1-q_i}\right),
& \mbox{if $\alpha = 1$}
\end{array}
\right.
\end{align*}
the reason for the use of the upper bounds in step~(c) of
\eqref{eq: example - bound on Renyi divergence with alpha below 2} and
\eqref{eq: example - bound on Renyi divergence with alpha above 2} is to
state sufficient conditions, in terms of $\{\varepsilon_n\}_{n=1}^{\infty}$,
for the boundedness of the R\'{e}nyi divergence $D_{\alpha}(P_{U_n} \| P_{V_n})$.

\section{The Exponential Decay of the Probability for a Non-Typical Sequence}
\label{section: The Exponential Decay of the Probability for a Non-Typical Sequence}

Let $U^N = (U_1, \ldots, U_N)$ be a sequence of i.i.d. symbols that are emitted by a
memoryless and stationary source with distribution $Q$ and a finite alphabet $\mathcal{A}$.
Let $|\mathcal{A}| = r < \infty$ denote the cardinality of the source alphabet, and assume
that all symbols are emitted with positive probability (i.e.,
$Q_{\min} \triangleq \min_{a \in \mathcal{A}} Q(a) > 0$). The empirical probability distribution
of the emitted sequence $\hat{P}_{U^N}$ is given by
\begin{align*}
\hat{P}_{U^N}(a) \triangleq \frac{1}{N} \sum_{k=1}^N 1\{U_k=a\}, \quad \forall \, a \in \mathcal{A}.
\end{align*}
For an arbitrary $\delta>0$, let the $\delta$-typical set be defined as
\begin{align}
T_{Q}(\delta) \triangleq \left\{ u^N \in \mathcal{A}^N \colon \bigl| \hat{P}_{u^N}(a) - Q(a) \bigr|
< \delta \, Q(a), \quad \forall \, a \in \mathcal{A} \right\},
\label{eq: delta typical set}
\end{align}
i.e., the empirical distribution of every symbol in an $N$-length $\delta$-typical sequence
deviates from the true distribution of this symbol by a fraction of less than $\delta$.
Consequently, the complementary of \eqref{eq: delta typical set} is given by
$$T_{Q}(\delta)^{\text{c}} = \left\{ u^N \in \mathcal{A}^N \colon \exists \, a \in \mathcal{A}, \; \;
\bigl| \hat{P}_{u^N}(a) - Q(a) \bigr| \geq \delta \, Q(a) \right\}.$$
From Sanov's theorem (see \cite[Theorem~11.4.1]{Cover_Thomas}), the asymptotic exponential
decay of the probability that a sequence $U^N$ is not $\delta$-typical, for a
specified $\delta>0$, is given by
\begin{align}
\lim_{N \rightarrow \infty} -\frac{1}{N} \, \log Q^N\bigl(T_{Q}(\delta)^{\text{c}}\bigr) = \min_{P \in \mathcal{P}_Q} D(P\|Q)
\label{eq: exponential decay rate of the probability for non-typicality}
\end{align}
where
\begin{align}
\mathcal{P}_Q \triangleq \Bigl\{P \; \text{is a probability measure on} \, (\mathcal{A}, \mathcal{F})
\colon \exists \, a \in \mathcal{A}, \; \; |P(a)-Q(a)| \geq \delta \, Q(a) \Bigr\}.
\label{eq: set of non-typical probability distributions}
\end{align}

We obtain in the following explicit upper and lower bounds on the exponential decay rate
on the RHS of \eqref{eq: exponential decay rate of the probability for non-typicality}. The emphasis
is on the upper bound, which is based on
Theorem~\ref{theorem: upper bound on KL divergence in terms of TV distance for probability measures on a finite set},
and we first introduce the lower bound for completeness.
The derivation of the lower bound is similar to the analysis in \cite[Section~4]{OrdentlichW_IT2005}; note, however,
that there is a difference between the $\delta$-typicality
in \cite[Eq.~(19)]{OrdentlichW_IT2005} and the way it is defined in \eqref{eq: delta typical set}.
The probability-dependent refinement of Pinsker's inequality (see \cite[Theorem~2.1]{OrdentlichW_IT2005})
states that
\begin{align}
D(P \| Q) \geq \varphi(\pi_Q) \; |P-Q|^2
\label{eq: probability-dependent refinement of Pinsker's inequality}
\end{align}
where
\begin{align}
\pi_Q \triangleq \max_{A \in \mathcal{F}} \min \bigl\{Q(A), 1-Q(A)\bigr\} \leq \frac{1}{2}
\label{eq: pi_Q}
\end{align}
and
\begin{align}
\varphi(p) = \left\{
\begin{array}{cl}
\frac{1}{4(1-2p)} \, \log \left( \frac{1-p}{p} \right), & \quad \mbox{if $p \in
\bigl[0, \frac{1}{2} \bigr)$,} \\[0.2cm]
\frac{\log e}{2}, & \quad
\mbox{if $p=\frac{1}{2}$}
\end{array}
\right.
\label{eq: phi function of the probability-dependent refinement of Pinsker's inequality}
\end{align}
is a monotonic decreasing and continuous function. Hence, $\varphi(\pi_Q) \geq \frac{\log e}{2}$,
and \eqref{eq: probability-dependent refinement of Pinsker's inequality} forms a probability-dependent
refinement of Pinsker's inequality \cite{OrdentlichW_IT2005}.
From \eqref{eq: set of non-typical probability distributions} and
\eqref{eq: probability-dependent refinement of Pinsker's inequality}, we have
\begin{align}
& \min_{P \in \mathcal{P}_Q} D(P\|Q) \nonumber \\[0.1cm]
& \geq \varphi(\pi_Q) \; \min_{P \in \mathcal{P}_Q} |P-Q|^2 \nonumber \\
& = \varphi(\pi_Q) \, \left( \min_{a \in \mathcal{A}} \delta \, Q(a) \right)^2 \nonumber \\[0.1cm]
& = \varphi(\pi_Q) \, Q_{\min}^2 \, \delta^2 \triangleq E_{\text{L}} \label{eq: refined lower bound on the exponent} \\
& \geq \left(\frac{Q_{\min}^2 \; \log e}{2} \right) \, \delta^2 \label{eq: looser lower bound on the exponent}
\end{align}
where the transition from \eqref{eq: refined lower bound on the exponent} to \eqref{eq: looser lower bound on the exponent}
follows from the global lower bound on $\varphi(\pi_Q)$.

We derive in the following an upper bound on the asymptotic exponential decay rate
in \eqref{eq: exponential decay rate of the probability for non-typicality}:
\begin{align}
& \min_{P \in \mathcal{P}_Q} D(P\|Q) \nonumber \\[0.1cm]
& \stackrel{\text{(a)}}{\leq} \min_{P \in \mathcal{P}_Q} \left\{ \log \left(1 + \frac{|P-Q|^2}{2 Q_{\min}} \right) \right\} \nonumber \\[0.1cm]
& = \log \left(1 + \frac{ \bigl(\min_{P \in \mathcal{P}_Q} |P-Q|\bigr)^2}{2 Q_{\min}} \right) \nonumber\\[0.1cm]
& \stackrel{\text{(b)}}{=} \log \left(1 + \frac{\bigl(\min_{a \in \mathcal{A}} \, (\delta \, Q(a) \bigr)^2}{2 Q_{\min}} \right) \nonumber\\[0.1cm]
& = \log \left(1 + \frac{Q_{\min} \, \delta^2}{2} \right) \triangleq E_{\text{U}}
\label{eq: upper bound on the exponent}
\end{align}
where inequality~(a) follows from \eqref{eq: upper bound on KL divergence in terms of TV distance for probability measures on a finite set},
and equality~(b) follows from \eqref{eq: set of non-typical probability distributions}.

The ratio between the upper and lower bounds on the asymptotic exponent in
\eqref{eq: exponential decay rate of the probability for non-typicality}, as given
in \eqref{eq: refined lower bound on the exponent} and \eqref{eq: upper bound on the exponent}
respectively, satisfies
\begin{align}
1 & \leq \frac{E_{\text{U}}}{E_{\text{L}}} \nonumber \\[0.1cm]
&= \frac{1}{Q_{\min}} \cdot \frac{\log e}{2 \, \varphi(\pi_Q)} \cdot
\frac{\log \left(1 + \frac{Q_{\min} \, \delta^2}{2} \right)}{\frac{\log e
\; \cdot \; Q_{\min} \, \delta^2}{2} } \label{eq: ratio of bounds on the exponent} \\[0.1cm]
& \leq \frac{1}{Q_{\min}} \nonumber
\end{align}
where inequality~\eqref{eq: ratio of bounds on the exponent} follows from the fact that
the second and third multiplicands in \eqref{eq: ratio of bounds on the exponent} are
both less than or equal to~1.
Note that both bounds in \eqref{eq: refined lower bound on the exponent} and
\eqref{eq: upper bound on the exponent} scale like $\delta^2$ for $\delta \approx 0$.

\section*{Appendix: A Proof of Inequality \eqref{eq: relation between relative entropy, dual of relative entropy and chi-squared divergence}}
This appendix proves inequality \eqref{eq: relation between relative entropy, dual of relative entropy and chi-squared divergence},
which provides upper and lower bounds on the difference $\log\bigl(1+\chi^2(P,Q)\bigr) - D(P\|Q)$ in terms
of the dual relative entropy $D(Q\|P)$. To this end, we first prove a new inequality relating $f$-divergences
\cite{Sason_ITW15_Jerusalem}, and the bounds in
\eqref{eq: relation between relative entropy, dual of relative entropy and chi-squared divergence}
then follow as a special case.

Recall the following definition of an $f$-divergence:
\begin{definition}
Let $f \colon (0, \infty) \rightarrow \reals$ be a convex function with $f(1)=0$,
and let $P$ and $Q$ be two probability measures defined on a common finite set $\mathcal{A}$.
The {\em $f$-divergence} from $P$ to $Q$ is defined by
\begin{equation}
D_f(P||Q) \triangleq \sum_{a \in \mathcal{A}} Q(a) \, f\left(\frac{P(a)}{Q(a)}\right)
\label{eq:f-divergence}
\end{equation}
with the convention that
\begin{align}
&\ 0 f\Bigl(\frac{0}{0}\Bigr) = 0, \quad
f(0) = \lim_{t \rightarrow 0^+} f(t), \nonumber \\
&\ 0 f\Bigl(\frac{b}{0}\Bigr) = \lim_{t \rightarrow 0^+}
t f\Bigl(\frac{b}{t}\Bigr) = b \lim_{u \rightarrow \infty} \frac{f(u)}{u}, \quad \forall \, b > 0.
\label{eq: conventions}
\end{align}
\label{definition:f-divergence}
\end{definition}

\vspace*{0.1cm}
\begin{proposition}
Let $f \colon (0, \infty) \rightarrow \reals$ be a convex function with $f(1)=0$ and
assume that the function $g \colon (0, \infty) \rightarrow \reals$, defined by
$g(t)=-t f(t)$ for every $t>0$, is also convex. Let $P$ and $Q$ be two probability
measures that are defined on a finite set $\mathcal{A}$, and assume that $P, Q$ are
strictly positive. Then, the following inequality holds:
\begin{align}
\min_{a \in \mathcal{A}} \frac{P(a)}{Q(a)} \cdot D_f(P||Q)
\leq -D_g(P||Q) - f\bigl(1 + \chi^2(P,Q)\bigr)
\leq \max_{a \in \mathcal{A}} \frac{P(a)}{Q(a)} \cdot D_f(P||Q).
\label{eq:new inequality relating f-divergences}
\end{align}
\label{proposition:new inequality relating f-divergences}
\end{proposition}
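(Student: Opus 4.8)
The plan is to reduce the two-sided estimate to a single statement about the Jensen gap of $f$ and then to control that gap by tilting $Q$ by the likelihood ratio. Write $r(a)\triangleq\frac{P(a)}{Q(a)}$, so that $\expectation_Q[r]=1$ and $m\triangleq\min_{a\in\mathcal{A}}r(a)$, $M\triangleq\max_{a\in\mathcal{A}}r(a)$ satisfy $0<m\le 1\le M<\infty$, with $m=M=1$ precisely when $P=Q$ (since $r\ge m$ everywhere and $\expectation_Q[r]=1$). Rewriting the three terms in \eqref{eq:new inequality relating f-divergences} through $r$ gives $D_f(P\|Q)=\expectation_Q[f(r)]$, $D_g(P\|Q)=\expectation_Q[-r\,f(r)]=-\expectation_P[f(r)]$, and, setting $s\triangleq 1+\chi^2(P,Q)$, also $s=\expectation_Q[r^2]=\expectation_P[r]\ge 1$. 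Hence the central quantity is exactly the Jensen gap $\expectation_P[f(r)]-f\bigl(\expectation_P[r]\bigr)\ge 0$; using $\expectation_P[f(r)]=\expectation_Q[r\,f(r)]$, the claim is equivalent to
\[
f(s)\ \le\ \expectation_Q\bigl[(r-m)\,f(r)\bigr]\qquad\text{and}\qquad\expectation_Q\bigl[(r-M)\,f(r)\bigr]\ \le\ f(s).
\]

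For the lower bound (the case $P=Q$ being trivial, as then all three terms in \eqref{eq:new inequality relating f-divergences} vanish), we have $1-m>0$, so I would introduce the tilted probability measure $\widetilde Q(a)\triangleq\frac{(r(a)-m)\,Q(a)}{1-m}$, which is legitimate since $r\ge m$ and $\sum_{a\in\mathcal{A}}(r(a)-m)Q(a)=1-m$. Jensen's inequality for the convex $f$ gives $\expectation_Q[(r-m)f(r)]=(1-m)\,\expectation_{\widetilde Q}[f(r)]\ge(1-m)\,f\bigl(\expectation_{\widetilde Q}[r]\bigr)$, and a short computation yields $u\triangleq\expectation_{\widetilde Q}[r]=\frac{s-m}{1-m}>0$. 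Since $s=(1-m)\,u+m\cdot 1$ exhibits $s$ as a convex combination of $u$ and $1$, convexity of $f$ together with $f(1)=0$ gives $f(s)\le(1-m)\,f(u)$; chaining the two estimates proves $\expectation_Q[(r-m)f(r)]\ge(1-m)f(u)\ge f(s)$.

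The upper bound is symmetric. For $P\ne Q$ we have $M-1>0$ and $\expectation_Q[r^2]\le M\,\expectation_Q[r]=M$, so I would set $\widehat Q(a)\triangleq\frac{(M-r(a))\,Q(a)}{M-1}$ and obtain, by Jensen, $\expectation_Q[(r-M)f(r)]=-(M-1)\,\expectation_{\widehat Q}[f(r)]\le-(M-1)\,f\bigl(\expectation_{\widehat Q}[r]\bigr)$, where $v\triangleq\expectation_{\widehat Q}[r]=\frac{M-s}{M-1}\in(0,1)$. Since $1=\tfrac1M\,s+\tfrac{M-1}{M}\,v$ exhibits $1$ as a convex combination of $s$ and $v$, convexity of $f$ and $f(1)=0$ give $0\le\tfrac1M f(s)+\tfrac{M-1}{M}f(v)$, i.e.\ $-(M-1)f(v)\le f(s)$, and chaining yields $\expectation_Q[(r-M)f(r)]\le-(M-1)f(v)\le f(s)$.

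The real content is the reduction in the first paragraph, after which the only obstacle is verifying that the auxiliary means $u$ and $v$ enter the relevant convex combinations in the directions claimed (the first bounds $f(s)$ from above by $(1-m)f(u)$, the second bounds $f(1)=0$ from below by $\tfrac1M f(s)+\tfrac{M-1}{M}f(v)$); the remaining work is routine, the only points requiring care being to isolate the degenerate case $P=Q$ so that the divisions by $1-m$ and $M-1$ make sense, and to note the elementary bound $\expectation_Q[r^2]\le M$ that keeps $v>0$. I would also remark that the hypothesis that $g(t)=-t\,f(t)$ is convex (it automatically satisfies $g(1)=0$) serves to make $D_g$ a genuine $f$-divergence, but the inequality itself uses only convexity of $f$ and the normalization $f(1)=0$. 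Specializing to $f(t)=-\log t$, for which $g(t)=t\log t$ is convex, $D_f(P\|Q)=D(Q\|P)$, $D_g(P\|Q)=D(P\|Q)$, and $f\bigl(1+\chi^2(P,Q)\bigr)=-\log\bigl(1+\chi^2(P,Q)\bigr)$, then recovers \eqref{eq: relation between relative entropy, dual of relative entropy and chi-squared divergence}.
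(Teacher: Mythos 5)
Your proof is correct, but it takes a genuinely different route from the paper's. The paper obtains \eqref{eq:new inequality relating f-divergences} by invoking, as a black box, Dragomir's refinement of Jensen's inequality \cite[Theorem~1]{Dragomir06} for the normalized Jensen functional $J_n(f,\underline{u},\cdot)$, and then simply computing that with $u_i = P(a_i)/Q(a_i)$ one has $J_n(f,\underline{u},Q)=D_f(P\|Q)$ and $J_n(f,\underline{u},P)=-D_g(P\|Q)-f\bigl(1+\chi^2(P,Q)\bigr)$. Your argument performs the same identification of the middle term as the Jensen gap $\expectation_P[f(r)]-f(\expectation_P[r])$ with $r=\mathrm{d}P/\mathrm{d}Q$, but then proves the two-sided comparison from scratch via the tilted measures $\widetilde Q\propto (r-m)Q$ and $\widehat Q\propto (M-r)Q$ together with two applications of ordinary Jensen and two convex-combination identities for $s=1+\chi^2(P,Q)$. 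In effect you have reproved Dragomir's inequality in this special case, which makes the proposition self-contained; the paper's route is shorter but relies on the external lemma. Your observation that convexity of $g$ is never used in the proof (it only certifies that $D_g$ is a bona fide $f$-divergence) also holds for the paper's argument. One small imprecision: the bound $\expectation_Q[r^2]\le M$ only gives $v\ge 0$, whereas you need $v>0$ to stay in the domain of $f$; strictness follows because equality in $\expectation_Q[r^2]\le M\,\expectation_Q[r]$ together with the strict positivity of $P$ and $Q$ forces $r\equiv M$ and hence $M=1$, i.e.\ $P=Q$, which you have already excluded. This is easily patched and does not affect correctness.
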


\begin{proof}
Let $\mathcal{A} = \bigl\{a_1, \ldots, a_n \bigr\}$, and
$\underline{u} = (u_1, \ldots, u_n) \in \reals_+^n$ be an arbitrary $n$-tuple
with positive entries. Define
\begin{align}
\begin{split}
& J_n(f, \underline{u}, P) \triangleq \sum_{i=1}^n P(a_i) \, f(u_i)
- f\left(\sum_{i=1}^n P(a_i) \, u_i \right), \\[0.1cm]
& J_n(f, \underline{u}, Q) \triangleq \sum_{i=1}^n Q(a_i) \, f(u_i)
- f\left(\sum_{i=1}^n Q(a_i) \, u_i \right).
\label{eq:Jensen functional}
\end{split}
\end{align}
The following refinement of Jensen's inequality has been introduced in
\cite[Theorem~1]{Dragomir06} for a convex function $f \colon (0, \infty) \rightarrow \reals$:
\begin{align}
\min_{i \in \{1, \ldots, n\}} \frac{P(a_i)}{Q(a_i)} \cdot J_n(f, \underline{u}, Q)
\leq J_n(f, \underline{u}, P)
\leq \max_{i \in \{1, \ldots, n\}} \frac{P(a_i)}{Q(a_i)} \cdot J_n(f, \underline{u}, Q).
\label{eq:Dragomir's inequality ('06)}
\end{align}
Let $u_i \triangleq \frac{P(a_i)}{Q(a_i)}$ for $i \in \{1, \ldots, n\}$.
Calculation of \eqref{eq:Jensen functional} gives that
\begin{align}
J_n(f, \underline{u}, Q)
& = \sum_{i=1}^n Q(a_i) \, f\left(\frac{P(a_i)}{Q(a_i)}\right)
- f\left(\sum_{i=1}^n Q(a_i) \cdot \frac{P(a_i)}{Q(a_i)} \right) \nonumber \\
& = \sum_{a \in \mathcal{A}} Q(a) \, f\left(\frac{P(a)}{Q(a)}\right) - f(1) \nonumber \\[0.1cm]
& = D_f(P||Q), \label{eq:J1} \\[0.1cm]
J_n(f, \underline{u}, P)
& = \sum_{i=1}^n P(a_i) \, f\left(\frac{P(a_i)}{Q(a_i)}\right)
- f\left(\sum_{i=1}^n \frac{P(a_i)^2}{Q(a_i)} \right) \nonumber \\
& \stackrel{(\text{a})}{=} - \sum_{i=1}^n Q(a_i) \, g\left(\frac{P(a_i)}{Q(a_i)}\right)
- f\left(\sum_{i=1}^n \frac{P(a_i)^2}{Q(a_i)} \right) \nonumber \\
& \stackrel{(\text{b})}{=} - D_g(P||Q) - f\bigl(1+\chi^2(P,Q)\bigr)
\label{eq:J2}
\end{align}
where equality~(a) holds by the definition of $g$, and equality~(b) follows from
equalities \eqref{eq: chi-squared divergence} and \eqref{eq:f-divergence}. The
substitution of \eqref{eq:J1} and \eqref{eq:J2} into \eqref{eq:Dragomir's inequality ('06)}
gives inequality~\eqref{eq:new inequality relating f-divergences}.
\end{proof}

As a consequence of Proposition~\ref{proposition:new inequality relating f-divergences},
we prove inequality~\eqref{eq: relation between relative entropy, dual of relative entropy and chi-squared divergence}.
Let $f(t) = -\log(t)$ for $t>0$. The function $f \colon (0, \infty) \rightarrow \reals$
is convex with $f(1)=0$, and $g(t) = -t f(t) = t \log(t)$ for $t>0$ is also convex with
$g(1)=0$. Inequality~\eqref{eq: relation between relative entropy, dual of relative entropy and chi-squared divergence}
follows by substituting $f, g$ into \eqref{eq:new inequality relating f-divergences}
where $D_f(P||Q) = D(Q||P)$ and $D_g(P||Q) = D(P||Q)$. Inequality
\eqref{eq: relation between relative entropy, dual of relative entropy and chi-squared divergence}
also holds in the case where $P$ is not strictly positive on $\mathcal{A}$
with the convention in \eqref{eq: conventions} where $0 \log 0 = \lim_{t \rightarrow 0^+} g(t) = 0$.

\section*{Acknowledgment}
Sergio Verd\'{u} is gratefully acknowledged for his earlier results in \cite{Verdu_ITA14}
that attracted my interest and motivated this work, for providing a draft of \cite{Verdu_book},
and raising the question that led to the inclusion of
Remark~\ref{remark: extension of the proof to general probability measures}.
Vincent Tan is acknowledged for pointing out \cite{TomamichelT_IT13} and
suggesting a simplified proof of
\eqref{eq: second intermediate step for the derivation of the upper bound on the chi-square divergence}.
A discussion with Georg B\"{o}cherer and Bernhard Geiger on their paper \cite{BochererG_arXiv}
has been stimulating along the writing of this manuscript.

\end{document}